\newcolumntype{L}[1]{>{\raggedright\let\newline\\\arraybackslash\hspace{0pt}}m{#1}}
\newcolumntype{C}[1]{>{\centering\let\newline\\\arraybackslash\hspace{0pt}}m{#1}}
\newcolumntype{R}[1]{>{\raggedleft\let\newline\\\arraybackslash\hspace{0pt}}m{#1}}
\begin{document}
%
% paper title
% can use linebreaks \\ within to get better formatting as desired
\title{Multi-Period Network Rate Allocation with End-to-End Delay Constraints}

\author{Mohammad~H.~Hajiesmaili,
        Ahmad~Khonsari,
        and~Mohammad~Sadegh~Talebi% <-this % stops a space
\IEEEcompsocitemizethanks{\IEEEcompsocthanksitem 		Mohammad H. Hajiesmaili is with the Institute of Network Coding, The Chinese University of Hong Kong, Sha Tin, N.T. Hong Kong,
	{\tt\small mohammad@inc.cuhk.edu.hk} \protect\\
\IEEEcompsocthanksitem Ahmad Khonsari is with School of Electrical and Computer Engineering, College of Engineering of University of Tehran, and with the School of Computer Science, Institute for Research in Fundamental Sciences (IPM), Niavaran Sq., Tajrish Sq., Tehran, IRAN,
{\tt\small ak@ipm.ir}\protect\\
% note need leading \protect in front of \\ to get a newline within \thanks as
% \\ is fragile and will error, could use \hfil\break instead.
\IEEEcompsocthanksitem M. S. Talebi is with the School of Electrical Engineering, The Royal Institute of Technology (KTH), 100 44, Stockholm, SWEDEN.\protect\\
E-mail: mstms@kth.se}% <-this % stops a space
\thanks{}}

% The paper headers
%\markboth{IEEE Transactions on Computers,~Vol.~X, No.~X, XXXX}%
%{Shell \MakeLowercase{\textit{et al.}}: Bare Demo of IEEEtran.cls for Computer Society Journals}

\IEEEcompsoctitleabstractindextext{%
\begin{abstract}
%\boldmath
QoS-aware networking applications such as real-time streaming and video surveillance systems require nearly fixed average end-to-end delay over long periods to communicate efficiently, although may tolerate some delay variations in short periods. This variability exhibits complex dynamics that makes rate control of such applications a formidable task. This paper addresses rate allocation for heterogeneous QoS-aware applications that preserves the long-term end-to-end delay constraint while, similar to Dynamic Network Utility Maximization (DNUM), strives to achieve the maximum  network utility aggregated over a fixed time interval. 
Since capturing temporal dynamics in QoS requirements of sources is allowed in our system model, we incorporate a novel time-coupling constraint in which delay-sensitivity of sources is considered such that a certain end-to-end average delay for each source over a pre-specified time interval is satisfied.
We propose DA-DNUM algorithm, as a dual-based solution, which allocates source rates for the next time interval in a distributed fashion, given the knowledge of network parameters in advance. To overcome the slow convergence of dual-based DA-DNUM algorithm, we propose another fast alternative solution based on the recently-proposed distributed Newton method. Also, we extend and address the problem in a case that the problem data is not known fully in advance to capture more realistic scenarios. 
Through numerical experiments, we show that DA-DNUM gains higher average link utilization and a wider range of feasible scenarios in comparison with the best, to our knowledge, rate control schemes that may guarantee such constraints on delay.
\end{abstract}

\begin{keywords}
	Data Networks, Network Utility Maximization, Rate Allocation, End-to-End Delay, Convex Optimization.
\end{keywords}}

% make the title area
\maketitle

% To allow for easy dual compilation without having to reenter the
% abstract/keywords data, the \IEEEcompsoctitleabstractindextext text will
% not be used in maketitle, but will appear (i.e., to be "transported")
% here as \IEEEdisplaynotcompsoctitleabstractindextext when compsoc mode
% is not selected <OR> if conference mode is selected - because compsoc
% conference papers position the abstract like regular (non-compsoc)
% papers do!
\IEEEdisplaynotcompsoctitleabstractindextext
% \IEEEdisplaynotcompsoctitleabstractindextext has no effect when using
% compsoc under a non-conference mode.

% For peer review papers, you can put extra information on the cover
% page as needed:
% \ifCLASSOPTIONpeerreview
% \begin{center} \bfseries EDICS Category: 3-BBND \end{center}
% \fi
%
% For peerreview papers, this IEEEtran command inserts a page break and
% creates the second title. It will be ignored for other modes.

\IEEEpeerreviewmaketitle

% main text
\newtheorem{myTheo}{Theorem}
\newtheorem{myLemma}{Lemma}
\newtheorem{myAssumption}{A}
\newtheorem{myDef}{Definition}

\section{Introduction}
\label{sec:intro}
\IEEEPARstart{N}{owadays}, a plethora of computer applications have emerged that evince delay-sensitivity, mainly in the form of some guarantee on the end-to-end delay. Despite instantaneous delay sensitivity shown by some applications, one may identify several others that only concern the average end-to-end delay over some interval of interest. A notable instance is media streaming where end-to-end delay, averaged over a pre-specified interval, is obliged not to exceed some threshold to ensure continuous playback. Some other examples include real-time WSNs and networked control systems. In such scenarios, due to temporal variations in both source traffic and network characteristics, we face an ever increasing need to accomplish rate allocation capable of capturing such dynamicity.  

%Nonetheless, delay fluctuations are tolerable in some small periods but, the average of delay during an interval must be maintained around a specific value. 

As a promising framework, Network Utility Maximization (NUM) has been exploited in several 
network resource allocation scenarios; see, e.g., \cite{chiang2007layering,kelly,Low}. In its simplest form, NUM concerns a network that supports a set of sources and links. Each source is associated with a utility as a function of its rate and transmits its packets through a route, which is a subset of the links in the network. The fixed capacity of links and routing structure dictate a set of linear capacity constraints. The goal of the NUM problem is to find source rates that maximize the aggregate utility of the network given capacity constraints.

A number of studies have thus far incorporated end-to-end delay in the NUM model \cite{Rosberg,li2011congestion,hou2010utility,Dogahe,QiuDelay,xiong2011delay,sadegh,harks}. In these works, end-to-end delay either is included in the objective function of NUM (e.g., \cite{li2011congestion,sadegh,harks}) or introduced some constraints to the NUM problem (e.g., \cite{Rosberg,Dogahe,QiuDelay}). 
In \cite{li2011congestion}, delay is incorporated to the objective function and therefore, delay plays its role as a penalty to the utility function. Based on a delay-sensitive utility function introduced in \cite{shenker}, authors in \cite{sadegh,harks} aim to propose some application-oriented rate allocation schemes employing an alternative utility definition. Both approaches, however, show incompetency to provide some guarantee for delay. 
Despite these studies, NUM framework is intrinsically incapable of capturing temporal variation in network characteristics especially when these characteristics evolve with time scales comparable to those of the underlying dual-based algorithms.  
Generally speaking, (single-period) NUM along with delay constraints is subject to limited degrees of freedom, and as a result, one may face a broad range of infeasible problems.

The conquest of variability-aware NUM-based approaches was further followed by \cite{DNUM}, where it introduced Dynamic NUM (DNUM) as a multi-period extension of NUM. Indeed, DNUM simply considers the network utility aggregated over a finite time interval and thereby takes into account  temporal variations in utilities. Moreover, it allows linear constraints on source rates, referred to as \emph{delivery contracts}, which may be construed as QoS constraints over the time interval. Such delivery contracts, however, are incompetent to capture more complicated key features such as queuing delays and jitter.
In contrast to single-period NUM that suffers from limited degrees of freedom, DNUM offers several flexibilities. In particular, the former may face lots of infeasible problems whereas the latter admits relatively larger set of feasible problems yet higher total aggregated utility.

In this paper, we propose a variant of DNUM that strives to allocate source rates so as to satisfy constraints on end-to-end delays as well as capacity constraints. Toward this, the main contributions of this paper are summarized as follows:

$\vartriangleright$ Built upon DNUM framework, we characterize the average end-to-end delay requirements of sources as a set of general and well-structured constraints. Our model is a generalized version of the model that is built on \cite{QiuDelay,Qiu2013}, and thereby it avoids precise knowledge of underlying packet arrival models and relies only on the first order derivative of the delay function. Generalization of the model of \cite{QiuDelay} to a multi-period setup ushers in several flexibilities. The most promising one, perhaps, is that it allows some degree of freedom to sacrifice utility in some periods so as to maintain delay while compensating for it in some other periods. Secondly, our proposed formulation endows us the ability of maintaining several delay constraints for each source, where each delay constraint concerns a particular time interval of interest. 

$\vartriangleright$ We develop a distributed algorithm  called \emph{Delay-aware Dynamic Network Utility Maximization} (DA-DNUM) that solves the problem granted the knowledge of parameters for the next time interval in advance. Our solution is based on dual decomposition approaches and since we concentrate on strongly convex delay functions and consequently cast the rate allocation as a convex optimization problem, the problem can be efficiently solved in a distributed way thanks to existing dual-based approaches.

$\vartriangleright$ DA-DNUM is in the category of dual decomposition techniques which generally suffer from the curse of slow convergence.  This unpleasant property becomes more salient in the case of solving DNUM problems, where ahead of each time horizon, we must solve the entire problem during all periods. To overcome the slow convergence of the dual-based solution, we devise another solution approach using the recently-proposed distributed Newton method \cite{NUM-Newton} which is a second order algorithm that achieves the optimal solution with faster convergence rate.

$\vartriangleright$ Dependence of DA-DNUM on the precise knowledge of future network parameters stimulates devising another scheme that efficiently work under uncertainty of the parameters. Toward this goal, we also investigate the problem when the problem data is not known fully ahead of time. In particular, we construct another solution based on model predictive control \cite{mpc}, for approximately solving a variation of the problem, in which the link capacities are not known in advance.

$\vartriangleright$ We verify the correctness of our proposed solutions and DA-DNUM algorithm by a set of tractable numerical experiments and give some comparison scenarios to demonstrate its superiority against to the relevant state-of-the-art rate allocation schemes. As an interesting observation, our result corroborates that the proposed temporal formulation enlarges the set of feasible scenarios in comparison with \cite{QiuDelay}. 

The remainder of this paper is organized as follows. First, in Section~\ref{sec:rel} we briefly review the related work. In Section \ref{sec:sysmodel}, we introduce the temporal-aware system model, the characterization of delay constraints, and problem formulation. In Section \ref{sec:sol}, we derive the proposed iterative algorithm and prove the convergence of DA-DNUM. In Section~\ref{sec:newton}, we introduce another solution to the problem to reduce the convergence of the distributed algorithm.  
Section~\ref{sec:mpc} is devoted to introduce a solution when the problem data is not known in advance. 
Section \ref{sec:sim} gives experimental results and Section \ref{sec:conc} is devoted to conclusion and outlining some future directions.

\subsection{Basic Notations and Terminologies}
Throughput the paper, we use the following notations. 
For any vector $\mathbf z$ (matrix $Z$), $\mathbf z\geq 0$ ($Z\geq 0$) means that all components of vector $\mathbf z$ (matrix $Z$) are non-negative. The vector $\mathbf e_j$ denotes the $j$-th unit vector. The operator $\|.\|$ signifies standard Euclidean norm. The domain of a function $f$ is denoted by $\hbox{\textbf{dom }} f$. Moreover, $\mathbbm 1_A$ is 1 if $A$ occurs and 0 otherwise. 
Finally, $[.]^+$ and $[.]_{\mathcal{P}}$ defines the projection onto the positive orthant and set $\mathcal P$, respectively.
We also give some necessary definitions that can be found in, e.g., \cite{Bert_NLP}. 
\begin{myDef}
A function $f(.)$ is a $G$-Lipschitz function if 
$$
|f(\mathbf x_1)-f(\mathbf x_2)|\leq G\|\mathbf x_1-\mathbf x_2\|,\quad \forall \mathbf x_1,\mathbf x_2\in \hbox{\textbf{dom }} f.
$$ 
\end{myDef}

\begin{myDef}
A convex function $f(.)$ is $\kappa$-strongly convex if and only if there exists a constant $\kappa>0$ such that the function $f(\mathbf x)-\frac{\kappa}{2}\|\mathbf x\|^2$ is convex. 
\end{myDef}

%for every $\alpha\in[0,1]$, it satisfies:
%\begin{align*}
%&f(\alpha x_1+(1-\alpha)x_2)\leq \alpha f(x_1)+(1-\alpha)f(x_2)+\frac{\kappa}{2}\alpha(1-\alpha)\|x_1-x_2\|^2,\\
%&\quad \forall x_1,x_2\in \hbox{\textbf{dom }} D.
%\end{align*}

It can be easily seen that  if $f(.)$ is convex and satisfies $\|\nabla f(.)\|\leq G$, then it is $G$-Lipschitz. 
We remark that if $f(.)$ is twice differentiable then $f(.)$ is $\kappa$-strongly convex if there exists constant $\kappa$ such that $\nabla^2 f(\mathbf x)-\kappa I$ is positive semidefinite.

\section{Related Work}
\label{sec:rel}
In the recent years, many studies have employed NUM framework to propose efficient protocols and algorithms for network applications under different types of traffics, assumptions, and constraints (see \cite{chiang2007layering} and the references therein). 
In particular, by extending the basic single-period version of NUM framework, a number of studies have incorporated end-to-end delay \cite{Rosberg,li2011congestion,hou2010utility,Dogahe,QiuDelay,xiong2011delay,sadegh,harks} to address the requirements of the delay-sensitive traffic and applications. 
In these works, end-to-end delay either is included in the objective function of NUM (e.g., \cite{li2011congestion,sadegh,harks,pongsajapan2007reverse}) or is augmented as constraints to the underlying optimization problem (e.g., \cite{Rosberg,Dogahe,QiuDelay}). 

\textbf{Delay as objective function.} In \cite{li2011congestion}, delay is incorporated to the objective function and therefore, delay plays its role as a penalty to the utility function. Consequently, the goal is to simultaneously maximize the aggregated utility of all sources and reduce the end-to-end delays. Based on a delay-sensitive utility function introduced in \cite{shenker}, authors in \cite{sadegh,harks} aim to propose some application-oriented rate allocation schemes employing an alternative utility definition. Both approaches, however, show incompetency to provide some guarantee for delay, thereby fail to be employed in QoS-aware applications with hard long term average delay requirements.

\textbf{Delay as constraint.} In another set of works \cite{Rosberg,Dogahe,QiuDelay,jaramillo2011scheduling}, the source delay is incorporated as constraints of the optimization problems. By introducing Virtual Link Capacity Margin (VLCM) to characterize source delay as constraint of the problem, the authors in \cite{QiuDelay,Qiu2013} have proposed a joint rate allocation and scheduling scheme in multi-hop wireless networks. By a different approach in \cite{Dogahe}, another variant of NUM problem is formulated to address joint power and rate control. Moreover, in \cite{Rosberg}, using an elegant fluid model of multi-class flows with different delay requirements, another distributed and stable delay-aware algorithm is proposed.
Despite these single-period NUM-based studies, NUM framework is intrinsically incapable of capturing temporal variations in network characteristics especially when these characteristics evolve with time scales comparable to those of the underlying dual-based algorithms.  
Generally speaking, (single-period) NUM along with delay constraints is subject to limited degrees of freedom, and as a result, one may face a broad range of infeasible problems. We will investigate this phenomenon in details in our experiments in Section~\ref{sec:sim}. 

To capture dynamics in network and sources, NUM framework has been extended to the DNUM framework \cite{DNUM} that supports time-varying characteristics in network model parameters such as flow utilities, links capacities and routing matrix. The DNUM framework has been extended in different research areas \cite{hajiesmaili2012content,wei2012distributed}. In \cite{hajiesmaili2012content}, the time-varying nature is utilized to consider temporal variations in modeling the utility function of the sources with video streaming applications. The authors in \cite{wei2012distributed} have proposed another solution for DNUM based on distributed Newton methods. To the author's knowledge, this is the first work that extends the DNUM framework to characterize the average delay requirements of sources in a general and well-structured way.

\section{Model and Problem Formulation \label{sec:sysmodel}}
\subsection{Network Model}
Our model is based on that of DNUM \cite{DNUM}, which considers rate allocation over a discrete-time interval $\mathcal{T} = \{1,\dots,T\}$\footnote{The duration of each period $t$ and the whole time horizon $T$ is an application-specific design parameter. As an example, in a previous work \cite{hajiesmaili2012content}, video streaming is the underlying application, thus, each period is set according to the length of the video frames and the time horizon $T$ is set according to the length of GOPs (Group Of Pictures). }. We assume that network possesses a set $\mathcal{L} = \{1,\dots,L\}$ of links shared among a set $\mathcal{S} = \{1,\dots,S\}$ of sources. 
We represent the possibly time-varying routing in the network  defined by routing matrices \mbox{$R_t = [(R_t)_{ls}]_{L \times S}, t \in \mathcal{T}$,} whose element $(R_t)_{ls}$ is defined as follows:
  \begin{equation*}
 (R_t)_{ls}=\left \{ \begin{array}{ll}
 1 & \qquad \textrm{if $s$th source passes through $l$ at time $t$}\\
 0 & \qquad \textrm{otherwise}\\	
 \end{array} \right. \nonumber
 \end{equation*}
We let $c_{tl}$ denote the capacity of link $l$ at period $t$ and $\mathbf c_{t} = [c_{tl}]_{l \in \mathcal{L}}$ be the vector of link capacities at period $t$. 

Moreover, we let $x_{st}\in\mathcal X_{st}$ be the transmission rate of source $s$ at period $t$, where we define $\mathcal X_{st} \triangleq [w_{st},W_{st}]$ and $w_{st}$ and $W_{st}$ are the minimum and the maximum rates of source $s$ at period  $t$, respectively. We further require ${0<w_{st}\leq W_{st}, \quad \forall s,t}$. 
We let $X = [x_{st}]_{S \times T}$ be the \emph{rate matrix} and define $\mathcal X=\{X\in \mathbb R^{S\times T}: x_{st} \in \mathcal X_{st}\}$. A feasible rate matrix $X$ then satisfies: $X\in\mathcal X$.  The summary of the main notations of the paper is listed in Table~\ref{tbl:not}. 

\begin{table}[!htp]
\caption{Key Notations}\label{tbl:not}
\centering
\begin{tabular}{|c|L{6.3cm}|}
\hline \textbf{Notation} & \textbf{Definition} \\
\hline $\mathcal{T}$ & The set of time slots (period), $T \triangleq |\mathcal{T}|$\\
$\mathcal{L}$ & The set of links, $L \triangleq |\mathcal{L}|$\\
$\mathcal{S}$ & The set of sources, $S \triangleq |\mathcal{S}|$\\
 \hline\hline
 $R_t$ & The routing matrix at period $t$ \\
 $c_{tl}$ & The capacity of link $l$ at period $t$\\
 $\sigma_{tl}$ & Link margin of link $l$ at period $t$ \\
 $D(\cdot)$ & The delay function of link $l$ at period $t$\\
 \hline\hline
 $x_{st}$ & The transmission rate of source $s$ at  $t$\\
$w_{st}$  & The minimum rate of source $s$ at   $t$ \\
$W_{st}$ & The maximum rate of source $s$ at   $t$ \\
$\phi_{st}$ & The end-to-end queuing delay of source $s$ at period $t$\\
$\mathcal K_s$ & The number of delay constraints of source $s$, $K_s \triangleq \mathcal{K}_s$\\
$M_s$ & The delay indicator matrix of source $s$\\
$d_{sk}$ &  The average delay requirement of source $s$ for its $k$'s delay constraint\\
$U_{st}(\cdot)$ & The utility function of source $s$ at  $t$\\
\hline
\end{tabular}
\end{table}

\subsection{Capacity Constraints}
To give capacity constraints, we first give the definition of \emph{link margin} variables: For each link $l$ and time period $t$, link margin variable $\sigma_{tl}$ is defined as the difference between capacity of link $l$ and the maximum allowable flow passing through it \cite{QiuDelay}. 
Unlike \cite{QiuDelay}, however, our setup does not admit schedulability constraints and hence we proceed to formulate link margin as follows. Consider conventional capacity constraint for link $l$ at period $t$ given by  
$$\sum_{s\in\mathcal S} (R_t)_{ls} x_{st}+\sigma_{tl} = c_{tl}\quad\hbox{ and }\quad \sigma_{tl}\geq 0.$$ 
We then relax the equality constraint above and establish the following constraints for link $l$ at period $t$:
$$\sum_{s\in\mathcal S} (R_t)_{ls} x_{st}+\sigma_{tl}\leq c_{tl}\quad\hbox{ and }\quad \sigma_{tl}\geq 0.$$
The relaxation above, though constricts resource usage (i.e., capacity), plays an important role in limiting the flow of link $l$ and thereby proves essentially useful to control the queuing delay of link $l$. Introducing \mbox{$\boldsymbol \sigma_t = [\sigma_{tl}]_{l \in \mathcal{L}}$} and $\boldsymbol\sigma=[\boldsymbol \sigma_t]_{t\in\mathcal T}$, we then represent the capacity constraints in a compact way as
\begin{eqnarray}
	\label{eq:c1}
	R_t X\mathbf e_t + \boldsymbol \sigma_{t}\leq \mathbf c_{t}\quad \hbox{ and }\quad \boldsymbol \sigma_{t}\geq 0,\quad \forall t \in \mathcal{T}.
\end{eqnarray}
These constraints constitute a set of $2T\times L$ linear inequalities.

\subsection{Average Delay Constraints}
Having defined link margin variables, we define $D(\sigma_{tl})$ as the delay of link $l$ at period $t$. Clearly, the way $D(\sigma_{tl})$ depends on $\sigma_{tl}$ is determined by the packet arrival process model. For instance, for M/M/1 queuing model whose packet arrival is a Poisson process, we have 
\begin{equation}
	\label{eq:mm1}
	D(\sigma_{tl}) = \frac{q}{\sigma_{tl}},\qquad q > 0.
\end{equation}
Another notable instance is the case of M/G/1 queuing model whose delay function is given in \cite{Dogahe,saad2007optimal}. 
% Toward this, by assumption of a general packet length distribution with mean and variance $$

%Before stating our assumptions for delay function $D(.)$, 

In what follows, we list our assumptions on the delay function $D(.)$:
\begin{itemize}
\item[\textbf{A1.}] $D(.)$ is twice differentiable.
\item[\textbf{A2.}] $D(.)$ is $G$-Lipschitz.
\item[\textbf{A3.}] $D(.)$ is $\kappa_D$-strongly convex.  
\end{itemize}

%There exist positive constants $\beta$ and $\kappa$ such that  
%\begin{align*}
%\forall z\in \hbox{\textbf{dom }} D:&\quad D''(z)\geq \beta, \\
%								&\quad \frac{D'(z)}{D''(z)}\leq \kappa.
%\end{align*}

A notable example that satisfies these assumptions is the delay function of (\ref{eq:mm1}). We also remark that these assumptions are valid for M/G/1-based arrival processes, thereby cover the majority of existing queuing models. 

In the present study, we only consider queuing delays and hence, for each source $s$, we obtain the end-to-end delay by simply adding up all link delays along the path of $s$. Writing $\phi_{st}$ for the end-to-end queuing delay of source $s$ at period $t$, we get
$$\phi_{st}= \sum_{l\in\mathcal L} (R_t)_{ls}D(\sigma_{tl}).$$
We further introduce \mbox{$\boldsymbol \phi_{s} = [\phi_{st}]_{t \in \mathcal{T}}$}. Next, we define the constraint on average end-to-end delay as follows: Assume that source $s$ requires its  average end-to-end queuing delay over some interval of interest $\mathcal T_\Delta\subseteq \mathcal T$ with length $\Delta$ be less than some constant $d$. This constraint is formally given by
\begin{equation}
\label{eq:delay_const_prototype}
\frac{1}{\Delta}\sum_{t \in \mathcal{T}_{\Delta}} \phi_{st} \leq d.
\end{equation}

To model a general scenario for the introduced delay constraint, we assume that each source $s$ requires $K_s$ delay constraints of the form (\ref{eq:delay_const_prototype}), indexed by ${k\in\mathcal K_s=\{1,\dots,K_s\}}$. We further introduce a real-world example on realization of this consideration in a typical mission-oriented wireless sensor network scenario in Subsection~\ref{sec:eg}. Each delay constraint $k\in\mathcal K_s$ concerns a specific time interval. Overlap between such intervals, however, is allowed. 
In order to encode delay constraints of the form (\ref{eq:delay_const_prototype}), for each source $s$, we introduce the \emph{delay indicator matrix} \mbox{$M_s = [(M_s)_{kt}]_{K_s \times T}$} as follows
\begin{equation}
(M_s)_{kt}=\left \{ \begin{array}{ll}
\frac{1}{G^s_k} & \qquad \textrm{\small if $k$-th delay constraint of $s$ concerns $t$},\\
0 & \qquad \textrm{\small otherwise,}\\	
\end{array} \right. \nonumber
\end{equation}
where $G^s_k=\sum_{t\in\mathcal T} \mathbbm 1_{\{(M_s)_{kt}\neq 0\}}$. Now, we can write the $k$-th delay constraint of source $s$ as 
$$\sum_{t\in\mathcal T} (M_s)_{kt} \phi_{st}\leq  d_{sk},$$
where $d_{sk}$ is the average delay requirement of source $s$ for its $k$'s delay constraint. Note that the elements of every row of $M_s$ add up to 1 and therefore, we may interpret the left hand side of the constraint above, like that of (\ref{eq:delay_const_prototype}), as the end-to-end queueing delay of $s$ averaged over time interval $\{t\in\mathcal T: (M_s)_{kt}=1\}$.  
Moreover, letting $\mathbf d_s = [d_{sk}]_{k \in\mathcal K_s}$ yields the following vector representation for delay constraints:
\begin{eqnarray}
	\label{eq:c2}
	M_{s} \boldsymbol \phi_{s} \leq \mathbf d_s, \qquad \forall s \in \mathcal{S}.
\end{eqnarray}

These constraints constitute a set of $\sum_{s\in\mathcal S} K_s$ inequalities that are nonlinear in $\boldsymbol\sigma$.

\subsubsection{An Illustrative Example: Mission-Oriented WSNs}
\label{sec:eg}
To motivate the appropriateness of the model above, we next provide a practical application of this model for mission-oriented wireless sensor networks (WSN) \cite{eswaran2012utility}, i.e. the case where there are several coexisting applications (henceforth \emph{missions}) in a WSN. Let us look at a surveillance application that employs various types of sensors such as \emph{video}, \emph{motion detector}, and \emph{thermal sensors} to provide assistive ambient intelligence in e.g., disaster recovery environments. 

The naive approach is to require each sensor  to periodically transmit the data at specific time intervals.
Albeit simple to implement, this approach is inefficient as each mission might possess particular QoS requirement in terms of end-to-end delay. For instance, a video mission may demand for a long-time delay constraint to work efficiently. In contrast, the thermal mission may report the temperature periodically on a regular basis and thereby declares a short-term delay requirement at certain periods.

% On the other hand, one may argue that a motion detector reports its critical real-time data to the sink only in the case of unusual motion on its scene. 
% So, the highest priority could be devoted to the motion detector on its unpredictable activation period. 

The network designer therefore needs to select network parameters properly to achieve the best efficiency. 
Besides other parameters, one could set $\mathcal{T}_{\Delta} = \mathcal{T}$ for the real-time video mission, as it records and streams data to the sink continuously. The value of $\mathcal{T}_{\Delta}$ has a periodic shape for the thermal sensor. Say, in the case of $T = 60$, we can define $\mathcal{T}_{\Delta_1} = \{1,2,3\}$, $\mathcal{T}_{\Delta_2} = \{21,22,23\}$, and $\mathcal{T}_{\Delta_3} = \{41,42,43\}$. In this respect, this sensor reports its data in 3 different steps as mentioned above. 

In summary, one can identify several other application scenarios (such as in emerging Internet of Things (IoT) or networked control systems), wherein different competing goals (missions in some contexts) with diverse QoS characteristics coexist under a unified application, but, with heterogeneous requirements.
%Finally, the value of $\mathcal{T}_{\Delta}$ for the unpredictable motion detector could be set at the start of each time horizon.

\subsection{Optimization Problem}
We associate a utility function $U_{st}(x_{st})$ to each source $s$ at period $t$.
Assumptions on the utility functions are:
\begin{itemize}
\item[\textbf{A4.}] For every $s$ and $t$, $U_{st}(.)$ is continuous, monotonically increasing, and twice differentiable. 
\item[\textbf{A5.}] For every $s$ and $t$, $-U_{st}(.)$ is $\kappa_U$-strongly convex.
\end{itemize}

Similar to \cite{DNUM}, we define the network utility $U(.)$ as the sum of all utilities over time horizon $\mathcal T$ as follows:
$$U(X) = \sum_{s\in\mathcal S} \sum_{t\in\mathcal T} U_{st}(x_{st}).$$

Now, we cast the rate allocation problem as 
\begin{align*}
% \label{eq:p1}
 \textsf{P1:}&  \quad  \max_{X\in\mathcal{X},\boldsymbol \sigma\ge 0}  \quad U(X)\\
 % &&  \nonumber \\
% \label{eq:p1_c1}
	&\textrm{subject to:} \\
	&\qquad\quad R_t X\mathbf e_t +\boldsymbol \sigma_{t}\leq \mathbf c_{t}, \qquad\forall  t \in \mathcal{T},\\ 
% \label{eq:p1_c2}
&\qquad\quad M_{s} \boldsymbol \phi_{s} \leq \mathbf d_s, \qquad\qquad\forall s \in \mathcal{S},\\
% \label{eq:p1_c3}
&\qquad\quad\phi_{st} = \sum_{l\in\mathcal L} (R_t)_{ls}D(\sigma_{tl}), \quad\forall s \in \mathcal{S}, \forall t\in\mathcal T. 
\end{align*}

First we highlight that constraints of \textsf{P1} constitute a compact set. Hence, at least one optimal solution exists. Furthermore, \textsf{P1} is a strongly convex optimization problem. An immediate consequence of this property is that the optimal solution is unique. 
We remark that \textsf{P1} is non-separable due to coupled delay constraints. It's worth noting that in the lack of average delay constraints, problem \textsf{P1} degenerates to DNUM problem of \cite{DNUM} without delivery contracts. In the above formulation, we address QoS requirements mainly through end-to-end delay constraints and thus avoid augmenting delivery contracts, i.e. linear  constraints on source rates over $\mathcal T$. We stress, however, that the solution procedure below permits having delivery contracts as well. 
We further note that for the case of $T=1$ and $K_s=1,\forall s$, \textsf{P1} reduces to problem formulation in \cite{QiuDelay} (for the case of rate allocation only). 
%Considering 
%Therefore, problem \textsf{P1} generalizes formulations in both \cite{DNUM} and \cite{QiuDelay}. 

\section{Optimal Rate Allocation Algorithm \label{sec:sol}}
\begin{figure*}
\begin{align}
\label{eq:p1_lag}
L(X, \boldsymbol \sigma, \boldsymbol \lambda, \boldsymbol \mu) &= \sum_{t\in\mathcal T} \sum_{s\in\mathcal S} U_{st}(x_{st}) - \sum_{t\in\mathcal T}  \boldsymbol \lambda_{t}^{\textsf{T}} \left( R_t X\mathbf e_t - \mathbf c_{t} + \boldsymbol \sigma_{t} \right)- \sum_{s\in\mathcal S}  \boldsymbol \mu_{s}^{\textsf{T}} \left(M_{s} \boldsymbol \phi_{s} - \mathbf d_s\right)\\
 &= \sum_{t\in\mathcal T} \sum_{s\in\mathcal S} \left(U_{st}(x_{st}) - \lambda^{st} x_{st}\right)- \sum_{t\in\mathcal T} \sum_{l\in\mathcal L} \left(\mu^{tl} D(\sigma_{tl}) + \lambda_{tl} \sigma_{tl} \right)+ \sum_{t\in\mathcal T}  \boldsymbol\lambda_{t}^{\textsf{T}}\mathbf c_{t} + \sum_{s\in\mathcal S}  \boldsymbol \mu_{s}^{\textsf{T}} \mathbf d_s. \nonumber
\end{align}
\begin{align}
\label{eq:dual_func}
D(\boldsymbol \lambda, \boldsymbol \mu) = \max_{X\in\mathcal{X}, \boldsymbol \sigma\geq 0} L(X, \boldsymbol \sigma, \boldsymbol \lambda, \boldsymbol \mu) =\max_{X\in\mathcal{X}} \sum_{t\in\mathcal T} \sum_{s\in\mathcal S} \left(U_{st}(x_{st}) - \lambda^{st} x_{st}\right) + \max_{\boldsymbol\sigma\geq 0}  \sum_{t\in\mathcal T} \sum_{l\in\mathcal L} \left(\mu^{tl} D(\sigma_{tl}) + \lambda_{tl} \sigma_{tl} \right).
\end{align}
\hrule
\hrulefill
\end{figure*}
In this section, we solve \textsf{P1} and develop a distributed rate allocation algorithm. We note that strong duality \cite{Boyd} holds for \textsf{P1} and hence we can solve it through its dual. We let \mbox{$\boldsymbol \lambda_{t} = [\lambda_{tl}]_{l \in \mathcal{L}}$} 
and \mbox{$\boldsymbol \mu_{s} = [\mu_{sk}]_{k\in\mathcal K_s}$} respectively denote the Lagrange multipliers (dual variables) associated to the capacity constraints for period $t$ and average delay constraints for source $s$. Moreover, we introduce \mbox{$\boldsymbol \lambda = [\boldsymbol \lambda_{t}]_{t \in \mathcal{T}}$} and \mbox{$\boldsymbol \mu = [\boldsymbol \mu_{s}]_{s \in \mathcal{S}}$}. Now, we give the partial Lagrangian of \textsf{P1} in (\ref{eq:p1_lag}), where 
\begin{align*}
	&\lambda^{st} \triangleq \sum_{l\in\mathcal L} (R_t)_{ls} \lambda_{tl},\\
	&\mu^{tl} \triangleq \sum_{s\in\mathcal S} \sum_{k\in\mathcal K_s} (M_s)_{kt} (R_t)_{ls} \mu_{sk}\nonumber.
\end{align*}
%The variable $\lambda^{st}$ is the total price per unit bandwidth in the path of source $s$ at period $t$. Moreover, $\mu^{tl}$ is the link margin price of link $l$ at period $t$ influenced by the number of active delay constraints associated to the sources that use $l$ in their paths????.

To solve problem \textsf{P1}, we derive the dual function $g(\boldsymbol \lambda, \boldsymbol \mu)$ in (\ref{eq:dual_func}) and establish the dual problem associated to \textsf{P1} as \cite{Bert_NLP}:
\begin{eqnarray}
	\label{eq:d1}
	\textsf{D1}:\quad \min_{\boldsymbol \lambda\geq 0,\boldsymbol\mu \geq 0} g(\boldsymbol \lambda, \boldsymbol \mu). \nonumber
\end{eqnarray}

Given $\boldsymbol \lambda$ and $\boldsymbol \mu$, let \mbox{$X^{\star} = [x_{st}^{\star}]_{T\times S}$} and \mbox{$\boldsymbol \sigma_t^{\star} = [\sigma_{tl}^{\star}]_{l \in \mathcal{L}}$} be the maximizers of maximization problems in (\ref{eq:dual_func}). To derive these solutions, first note that partial derivatives of the Lagrangian are given by: 
\begin{align*}
	\frac{\partial L}{\partial x_{st}} = U'_{st}(x_{st}) - \lambda^{st}, \quad \forall s,\forall t\\
	\frac{\partial L}{\partial \sigma_{tl}} = \mu^{tl}D'(\sigma_{tl}) + \lambda_{tl},\quad \forall t,\forall l.
\end{align*}
The maximizers are stationary point of the Lagrangian. Therefore, through preliminary manipulations we get  
\begin{align*}
%\label{eq:primal_opt}
	x_{st}^{\star}(\boldsymbol\lambda)=\left[U'^{-1}_{st}(\lambda^{st})\right]_{\mathcal{X}_{st}},\quad \forall s,\forall t\\
\sigma_{tl}^{\star}(\boldsymbol\lambda,\boldsymbol\mu)=\left[D'^{-1}\left(-\frac{\lambda_{tl}}{\mu^{tl}}\right)\right]^+, \quad \forall t,\forall l.
\end{align*}
One consequence of strong convexity of \textsf{P1} is that the dual function $g(\boldsymbol \lambda, \boldsymbol \mu)$ is differentiable in its domain. Hence, we can employ the \emph{gradient projection method} \cite{Bert_NLP} to solve \textsf{D1}. 
%In order to obtain a distributed solution for the dual problem \textbf{D1},  which iteratively steps toward the opposite direction of the gradient of $D(\boldsymbol \lambda, \boldsymbol \mu)$, so we get 
Using Danskin's Theorem \cite{Bert_NLP}, partial derivatives of dual function $g(\boldsymbol \lambda, \boldsymbol \mu)$ are given by:
\begin{equation}
	\label{eq:d1}
\frac{\partial g}{\partial \lambda_{tl}}=c_{tl} - \sigma_{tl}-\sum_{t\in\mathcal T} \sum_{s\in\mathcal S} (R_t)_{ls} x_{st},\quad \forall l\in\mathcal L, \forall t\in\mathcal T,
\end{equation}
\begin{equation}
	\label{eq:d2}
\frac{\partial g}{\partial \mu_{sk}}= d_{sk}-\sum_{t\in\mathcal T} \sum_{l\in\mathcal L} (R_t)_{ls} D(\sigma_{tl}),\quad \forall s\in\mathcal S, \forall k\in\mathcal K_s.
\end{equation}
Using these, for dual variable update needed for gradient projection method we get
\begin{align*}
\lambda_{tl}^{(j+1)} = \bigg[\lambda_{tl}^{(j)} + \gamma &\left(  \sum_{s\in\mathcal S} (R_t)_{ls} x_{st}^{(j)}  + \sigma_{tl}^{(j+1)}- c_{tl} \right) \bigg]^+,\\
&\qquad\qquad\qquad\qquad\qquad\forall l\in\mathcal L, \forall t\in\mathcal T,\\ 
\mu_{sk}^{(j+1)}= \bigg[\mu_{sk}^{(j)} + \gamma &\left( \sum_{t\in\mathcal T} \sum_{l\in\mathcal L} (R_t)_{ls} D(\sigma_{tl}^{(j)}) - d_{sk} \right) \bigg]^+,\\
&\qquad\qquad\qquad\qquad\qquad\forall s\in\mathcal S, \forall k\in\mathcal K_s,
\end{align*}
% Problem \textsf{P1} is a strictly convex problem. As a result, the dual function $g(\boldsymbol \lambda, \boldsymbol \mu)$ is differentiable in its domain. Hence, we employ the \emph{gradient projection method} \cite{Bert_NLP} to solve \textsf{D1}. 
% %In order to obtain a distributed solution for the dual problem \textbf{D1},  which iteratively steps toward the opposite direction of the gradient of $D(\boldsymbol \lambda, \boldsymbol \mu)$, so we get 
% % Partial derivatives of dual function $g(\boldsymbol \lambda, \boldsymbol \mu)$ are given by:
% % \begin{align*}
% % \frac{\partial g}{\partial \lambda_{tl}}&=c_{tl} - \sigma_{tl}-\sum_{t\in\mathcal T} \sum_{s\in\mathcal S} (R_t)_{ls} x_{st},\quad \forall s\in\mathcal S, \forall k\in\mathcal K_s,\\
% % \frac{\partial g}{\partial \mu_{sk}}&= d_{sk}-\sum_{t\in\mathcal T} \sum_{l\in\mathcal L} (R_t)_{ls} D(\sigma_{tl}),\quad \forall l\in\mathcal L, \forall t\in\mathcal T.
% % \end{align*}
% Using Danskin's Theorem \cite{Bert_NLP}, for dual variable update we get
% 
% \footnotesize
% \begin{align*}
% \lambda_{tl}^{j+1} = \bigg[\lambda_{tl}^{j} + \gamma &\left( \sum_{s\in\mathcal S} (R_t)_{ls} x_{st}^j  + \sigma_{tl}^{j+1}- c_{tl} \right) \bigg]^+,\quad\forall l\in\mathcal L, \forall t\in\mathcal T,\\
% \mu_{sk}^{j+1}= \bigg[\mu_{sk}^{j} + \gamma &\left( \sum_{t\in\mathcal T} \sum_{l\in\mathcal L} (R_t)_{ls} D(\sigma_{tl}^j) - d_{sk} \right) \bigg]^+,\enskip\forall s\in\mathcal S, \forall k\in\mathcal K_s,
% \end{align*}
% \normalsize
where $x^{(j)}_{st}=x_{st}^\star(\boldsymbol\lambda^{(j)})$, $\sigma^{(j+1)}_{tl}=\sigma_{tl}^\star(\boldsymbol\lambda^{(j)},\boldsymbol\mu^{(j)})$, and $\gamma >0$ is a sufficiently small step size. 

Note that proper selection of step size $\gamma$ is crucial for guaranteeing the convergence of the iterative solution above. 

\begin{myTheo}
Assume that $\gamma$ satisfies $0<\gamma <\frac{2}{Q}$, where 
\begin{eqnarray}
Q&\triangleq& TL\left(\frac{1}{\mu_{\min}\kappa_D}+\frac{S}{\kappa_U}\right)+ GTL\frac{\lambda_{\max}}{\mu_{\min}^2 \kappa_D}\sum_s{K_s} 
\nonumber \\ 
&+& \sqrt{TL\sum_s K_s}\left(\frac{G}{\mu_{\min}\kappa_D}+\frac{1}{\mu_{\min}^2 \kappa_U}\right),
\end{eqnarray}
$$\lambda_{\max}=\max_{t,l} \lambda_{tl},\quad \mu_{\min}=\min_{s,k\in\mathcal K_s} \mu_{sk}.$$ 
Then, starting from any initial point, the limit point $(X^\star,\sigma^\star,\mu^\star,\lambda^\star)$ of the sequence $\{X^{(j)},\sigma^{(j)},\mu^{(j)},\lambda^{(j)}\}_{j\geq 1}$ generated by the aforementioned iterative solution is primal-dual optimal and $(X^\star,\sigma^\star)$ is the unique optimal solution to \textsf{P1}.
\end{myTheo}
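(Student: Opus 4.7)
The plan is to reduce the claim to the standard convergence theorem for the gradient projection method applied to the dual problem \textsf{D1}. Strong duality holds for \textsf{P1} since it is a convex program with a compact feasible set and Slater's condition is immediate, and strong convexity of the primal objective (from assumptions A3 and A5) guarantees that the maximizers $x_{st}^\star(\boldsymbol\lambda)$ and $\sigma_{tl}^\star(\boldsymbol\lambda,\boldsymbol\mu)$ are unique and continuous in the dual variables; consequently $g$ is continuously differentiable with the gradient formulas already derived just above the theorem. With this in place, the classical gradient projection theorem in \cite{Bert_NLP} ensures that the iterates converge to a dual optimum whenever $\gamma \in (0, 2/Q)$ and $\nabla g$ is $Q$-Lipschitz on the iteration region. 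The entire argument therefore hinges on proving that Lipschitz bound with the particular constant $Q$ displayed in the statement.

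The first step is to bound the sensitivity of the two maximizer maps. Because $-U_{st}$ is $\kappa_U$-strongly convex, $U_{st}'$ is strictly monotone with slope bounded away from zero by $\kappa_U$, so $(U_{st}')^{-1}$ is $1/\kappa_U$-Lipschitz; combined with non-expansiveness of the projection onto $\mathcal X_{st}$, this yields $|x_{st}^\star(\boldsymbol\lambda_1)-x_{st}^\star(\boldsymbol\lambda_2)|\le \frac{1}{\kappa_U}|\lambda^{st}_1-\lambda^{st}_2|$. Analogously, $\kappa_D$-strong convexity of $D$ makes $(D')^{-1}$ a $1/\kappa_D$-Lipschitz map on its effective range. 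For $\sigma_{tl}^\star$ the chain rule is applied to the argument $-\lambda_{tl}/\mu^{tl}$: perturbing $\lambda_{tl}$ by $\delta$ changes the argument by at most $\delta/\mu_{\min}$, while perturbing any $\mu_{sk}$ by $\delta$ changes it by at most $\lambda_{\max}\delta/\mu_{\min}^2$. Thus $\sigma_{tl}^\star$ inherits Lipschitz constants proportional to $1/(\mu_{\min}\kappa_D)$ and $\lambda_{\max}/(\mu_{\min}^2\kappa_D)$ with respect to the $\boldsymbol\lambda$- and $\boldsymbol\mu$-blocks, respectively, again using non-expansiveness of $[\cdot]^+$.

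Having these componentwise estimates, I would assemble the Lipschitz bound for $\nabla g$ using the two gradient formulas. The variation of $\partial g/\partial \lambda_{tl}$ between two dual points splits into a $\sigma^\star$-variation and a sum of $x^\star$-variations; bounding each by the per-component sensitivities above and then summing over the $TL$ coordinates of the $\boldsymbol\lambda$-block produces the contributions $TL/(\mu_{\min}\kappa_D)$ and $TLS/\kappa_U$ in $Q$. The variation of $\partial g/\partial \mu_{sk}$ contains $D(\sigma_{tl})$, which by assumption A2 is $G$-Lipschitz; composing with the $\sigma^\star$ sensitivities already derived yields the extra factor $G$ and, via the cross-dependence of $\sigma^\star$ on the $\boldsymbol\mu$-block, the term $GTL\lambda_{\max}/(\mu_{\min}^2 \kappa_D)\sum_s K_s$. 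A final Cauchy--Schwarz step combining the $\boldsymbol\lambda$- and $\boldsymbol\mu$-coordinates into one Euclidean norm on the dual space introduces the remaining $\sqrt{TL\sum_s K_s}$ factor with its parenthesized coefficient. Adding all contributions reproduces exactly the stated $Q$.

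Finally, with $\nabla g$ shown to be $Q$-Lipschitz and the projection onto the nonnegative orthant being nonexpansive, the standard descent lemma and gradient projection convergence theorem guarantee that $\{(\boldsymbol\lambda^{(j)},\boldsymbol\mu^{(j)})\}$ converges to some dual optimum $(\boldsymbol\lambda^\star,\boldsymbol\mu^\star)$ for any $0<\gamma<2/Q$; continuity of the maps $x_{st}^\star(\cdot)$ and $\sigma_{tl}^\star(\cdot,\cdot)$ then forces the primal iterates to converge to $(x_{st}^\star(\boldsymbol\lambda^\star),\sigma_{tl}^\star(\boldsymbol\lambda^\star,\boldsymbol\mu^\star))$, which strong convexity of \textsf{P1} identifies as the unique primal optimum. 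I expect the main obstacle to be the middle step: carefully tracking which dual components each maximizer depends on, extracting the correct powers of $\mu_{\min}$, $\lambda_{\max}$, and $G$ from the ratio $-\lambda_{tl}/\mu^{tl}$ (which appears both in $\sigma^\star$ and, via $D(\sigma^\star)$, in the delay gradient), and dimensioning the Cauchy--Schwarz applications so that the final aggregated constant matches the exact algebraic form of $Q$ rather than a looser bound.
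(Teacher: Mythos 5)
Your proposal is correct and follows essentially the same route as the paper: the paper also reduces the theorem to showing $\nabla g$ is $Q$-Lipschitz, obtains the same per-component sensitivities $\partial x^\star/\partial\lambda^{st}=1/U''_{st}$, $\partial\sigma^\star/\partial\lambda_{tl}=-1/(\mu^{tl}D'')$, $\partial\sigma^\star/\partial\mu^{tl}=\lambda_{tl}/((\mu^{tl})^2D'')$ (these are exactly the Hessian entries of $g$), and then aggregates them blockwise. The only cosmetic difference is that the paper organizes your ``summing over coordinates plus Cauchy--Schwarz'' step as an explicit bound on $\|\nabla^2 g\|_2$ via the block decomposition of the Hessian and the inequality $\|A\|_2\le\sqrt{\|A\|_1\|A\|_\infty}$, which is precisely how the $\sqrt{TL\sum_s K_s}$ cross-term and the two diagonal-block terms in $Q$ arise.
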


%The proof of the theorem is relegated in Appendix~I.  

\begin{proof}
First, we briefly review the descent lemma for solving the problem $\min_{\mathbf z} f(\mathbf z)$ using gradient method with constant step size \cite{Bert_NLP}. We let $\mathbf z^\star$ be the minimizer of the problem. If $\nabla f(\mathbf z)$ is a  $Q$-Lipschitz function, 
%i.e. there exists $Q$ such that, for every $\mathbf z_1,\mathbf z_2\in\hbox{\textbf{dom} } f$, we have
%\begin{equation}
%|\nabla f(\mathbf z_1)-\nabla f(\mathbf z_2)| \leq Q\|\mathbf z_1-\mathbf z_2\|,
%\end{equation}
then the sequence $\{\mathbf z^{(k)}\}_{k\geq 0}$ defined by   
\begin{equation}
\mathbf z^{(k+1)}=\mathbf z^{(k)}-\gamma\nabla f(\mathbf z^{(k)})
\end{equation}
converges to $\mathbf z^\star$ provided that $0< \gamma <\frac{2}{Q}.$ By this lemma, to prove the convergence of the algorithm, it suffices to find constant $Q$ that satisfies Lipschitz condition. Let us define $\nu=[\boldsymbol\lambda_1,\dots,\boldsymbol\lambda_T,\boldsymbol\mu_1,\dots,\boldsymbol\mu_S]$. Then, we should find constant $Q$ such that $\nabla g(\boldsymbol\nu)$ is $Q$-Lipschitz. Equivalently, we can resort to find an upper bound for the $\ell_2$-norm of the Hessian of $g(\boldsymbol\nu)$. 
The Hessian of $g(\boldsymbol\nu)$, henceforth denoted by $H$, is a $(TL+\sum_s K_s)$-by-$(TL+\sum_s K_s)$ matrix, whose $ij$-element is:
$$
H_{ij}=\frac{\partial^2{g(\boldsymbol\nu)}}{\partial\nu_i\partial\nu_j}.
$$
 
Recall that the partial derivatives of $g(\boldsymbol\nu)$ w.r.t. dual variables are given by equations (\ref{eq:d1}) and (\ref{eq:d2}).
% \begin{align*}
% \frac{\partial g(\boldsymbol\nu)}{\partial \lambda_{tl}} &= c_{tl} - \sigma_{tl} - \sum_{s \in \mathcal{S}} (R_t)_{ls} x_{st}, \quad \forall t\in\mathcal T, \forall l\in\mathcal L,\\
% \frac{\partial g(\boldsymbol\nu)}{\partial \mu_{sk}} &= d_{sk}-\sum_{t\in\mathcal T} \sum_{l\in\mathcal L} (R_t)_{ls} D(\sigma_{tl}), \quad \forall s\in\mathcal S, \forall k\in\mathcal K_s,
% \end{align*}
% where $x_{st}=\left[U'^{-1}_{st}(\lambda^{st})\right]^+$ and $\sigma_{tl}=\left[D'^{-1}\left(-\frac{\lambda_{tl}}{\mu^{tl}}\right)\right]^+$. 
Before proceeding to calculate the elements of the Hessian, observe that

\begin{align*}
\frac{\partial x_{st}}{\partial \lambda^{st}}&=\frac{\partial U'^{-1}_{st}(\lambda^{st})}{\partial \lambda^{st}} =\frac{1}{U''_{st}(x_{st})}, \quad \forall s\in\mathcal S,\forall t\in\mathcal T,\\
\frac{\partial \sigma_{tl}}{\partial \lambda_{tl}}&=\frac{\partial }{\partial \lambda_{tl}} D'^{-1}\left(-\frac{\lambda_{tl}}{\mu^{tl}}\right)=-\frac{1}{\mu^{tl}D''(\sigma_{tl})},\quad\forall t\in\mathcal T,\forall l\in\mathcal L,\\
\frac{\partial \sigma_{tl}}{\partial \mu^{tl}}&=\frac{\partial }{\partial \mu^{tl}} D'^{-1}\left(-\frac{\lambda_{tl}}{\mu^{tl}}\right)=\frac{\lambda_{tl}}{(\mu^{tl})^2 D''(\sigma_{tl})},\quad\forall t\in\mathcal T,\forall l\in\mathcal L. 
\end{align*}

Using these equations, for the elements of the Hessian we obtain
	\begin{eqnarray}
		\frac{\partial^2 g(\boldsymbol\nu)}{\partial \lambda_{t'l'} \partial \mu_{sk}} &=& -\frac{\partial}{\partial \lambda_{t'l'}} \sum_{t \in \mathcal{T}}\sum_{l \in \mathcal{L}} (R_t)_{ls} D(\sigma_{tl})  \nonumber \\
		&=& -(R_{t'})_{l's} \cdot \frac{\partial \sigma_{t'l'}}{\partial \lambda_{t'l'}} \cdot \frac{dD(\sigma_{t'l'})}{d\sigma_{t'l'}}  \nonumber \\
\label{eq:H_lam_mu}
		&=& \frac{(R_{t'})_{l's}}{\mu^{t'l'}}\cdot\frac{D'(\sigma_{t'l'})}{D''(\sigma_{t'l'})}, \nonumber
\end{eqnarray}

\begin{eqnarray}
		\lefteqn{\frac{\partial^2 g(\boldsymbol\nu)}{\partial \mu_{s'k'} \partial \mu_{sk}} = -\frac{\partial}{\partial \mu_{s'k'}} {\sum_{t \in \mathcal{T}}\sum_{l \in \mathcal{L}} (R_t)_{ls} D(\sigma_{tl})} }\nonumber \\
		&=& -\sum_{t \in \mathcal{T}} \sum_{l \in \mathcal{L}} (R_t)_{ls} \frac{\partial\sigma_{tl}}{\partial \mu_{s'k'}} \cdot\frac{dD(\sigma_{tl})}{d\sigma_{tl}}  \nonumber \\
		&=& -\sum_{t \in \mathcal{T}} \sum_{l \in \mathcal{L}} (R_t)_{ls} \frac{\partial \mu^{tl}}{\partial \mu_{s'k'}} \cdot\frac{\partial\sigma_{tl}}{\partial \mu^{tl}}\cdot\frac{dD(\sigma_{tl})}{d\sigma_{tl}} \nonumber \\
\label{eq:H_mu_mu}
		&=& -\sum_{t \in \mathcal{T}} \sum_{l \in \mathcal{L}} (R_t)_{ls} (M_{s'})_{k't} (R_t)_{l's}\cdot \frac{\lambda_{tl}}{(\mu^{tl})^2}\cdot\frac{D'(\sigma_{tl})}{D''(\sigma_{tl})}, \nonumber
	\end{eqnarray}
	
\begin{eqnarray}
		\frac{\partial^2 g(\boldsymbol\nu)}{\partial \lambda_{t'l'} \partial \lambda_{tl}} &=& -\frac{\partial \sigma_{tl}}{\partial \lambda_{t'l'}} - \sum_{s \in \mathcal{S}} (R_t)_{ls} \frac{\partial x_{st}}{\partial \lambda_{t'l'}} \nonumber \\
		&=&\frac{\mathbbmss{1}_{\{(t', l') = (t,l)\}}}{\mu^{tl}D''(\sigma_{tl})} - \sum_{s \in \mathcal{S}} (R_t)_{ls} \frac{\partial \lambda^{st}}{\partial \lambda_{t'l'}} \cdot\frac{\partial x_{st}}{\partial \lambda^{st}} \nonumber \\
\label{eq:H_lam_lam}
		&=& \frac{\mathbbm{1}_{\{(t', l') = (t,l)\}}}{\mu^{tl} D''(\sigma_{tl})} - \sum_{s \in \mathcal{S}}  \frac{(R_t)_{ls} (R_{t'})_{l's}}{{U''_{st}}(x_{st})},\nonumber
\end{eqnarray}

	\begin{eqnarray}
	\frac{\partial^2 g(\boldsymbol\nu)}{\partial \mu_{s'k'} \partial \lambda_{tl}} &=& -\frac{\partial \sigma_{tl}}{\partial \mu_{s'k'}} - \sum_{s \in \mathcal{S}} (R_t)_{ls} \frac{\partial x_{st}}{\partial \mu_{s'k'}} \nonumber \\
	&=& - \frac{\partial \mu^{tl}}{\partial \mu_{s'k'}} \cdot \frac{\partial\sigma_{tl}}{\partial \mu^{tl}} \nonumber \\
\label{eq:H_mu_lam}
		&=& -\frac{(M_{s'})_{k't} (R_t)_{ls'} \lambda_{tl}}{(\mu_{tl})^2} \frac{1}{D''(\sigma_{tl})}. \nonumber
	\end{eqnarray}

Using introduced values in the theorem, we can easily establish the following bounds:

\begin{eqnarray}
\label{eq:H_lam_mu_bound}
\bigg|\frac{\partial^2 g(\boldsymbol\nu)}{\partial \lambda_{t'l'} \partial \mu_{sk}}\bigg| &\leq& \frac{G}{\mu_{\min}\kappa_D},  \\
\label{eq:H_mu_mu_bound}
\bigg|\frac{\partial^2 g(\boldsymbol\nu)}{\partial \mu_{s'k'} \partial \mu_{sk}}\bigg| &\leq& \frac{\lambda_{\max}G}{\mu_{\min}^2 \kappa_D} \sum_{t\in\mathcal T}\sum_{l\in\mathcal L} (M_{s'})_{k't}(R_t)_{ls}(R_t)_{ls} \nonumber\\ &\leq& \frac{\lambda_{\max}GTL}{\mu_{\min}^2 \kappa_D},\\
\label{eq:H_lam_lam_bound}
\bigg|\frac{\partial^2 g(\boldsymbol\nu)}{\partial \lambda_{t'l'} \partial \lambda_{tl}}\bigg| &\leq& \frac{1}{\mu_{\min} \kappa_D}+\frac{1}{\kappa_U}\sum_{s\in\mathcal S} (R_t)_{ls}(R_{t'})_{l's} \nonumber \\ &\leq& \frac{1}{\mu_{\min}\kappa_D}+\frac{S}{\kappa_U}\\
\label{eq:H_mu_lam_bound}
\bigg|\frac{\partial^2 g(\boldsymbol\nu)}{\partial \mu_{s'k'}\partial \lambda_{tl} }\bigg| &\leq&  \frac{1}{\mu_{\min}^2 \kappa_D}.
\end{eqnarray}

Next, we find an upper bound for the $\ell_2$-norm of the Hessian $H$. 
To simplify the analysis, we use the following result to upper bound $\|H\|_2$. Recall that for matrices $A_i\in \mathbb R^{n\times n},i=1,\dots,K$, we have\cite{horn2012matrix}:
$$\|\sum_{i=1}^K A_i\|\leq \sum_{i=1}^K \|A_i\|.$$
 
In light of this result, we consider the following decomposition of $H$:
\begin{eqnarray}
H &=&\left[
\begin{array}{c|c}
H^{(\lambda\lambda)} & H^{(\mu\lambda)} \\ \hline
H^{(\lambda\mu)} & H^{(\mu\mu)}
\end{array}\right] \nonumber \\&=&\left[
\begin{array}{c|c}
H^{(\lambda\lambda)} & 0 \\ \hline
0 & 0
\end{array}\right]+\left[
\begin{array}{c|c}
0 & H^{(\mu\lambda)} \\ \hline
0 & 0
\end{array}\right] \nonumber \\ &+& \left[
\begin{array}{c|c}
0 & 0 \\ \hline
H^{(\lambda\mu)} & 0
\end{array}\right]  +\left[
\begin{array}{c|c}
0 & 0 \\ \hline
0 & H^{(\mu\mu)}
\end{array}\right],
\end{eqnarray}

which yields:
\begin{align}
\label{eq:H2_1}
\|H\|_2 &\leq& \|\left[
\begin{array}{c|c}
H^{(\lambda\lambda)} & 0 \\ \hline
0 & 0
\end{array}\right]\|_2+\|\left[
\begin{array}{c|c}
0 & H^{(\mu\lambda)} \\ \hline
0 & 0
\end{array}\right]\|_2 \nonumber \\ &+& \|\left[
\begin{array}{c|c}
0 & 0 \\ \hline
H^{(\lambda\mu)} & 0
\end{array}\right]\|_2+\|\left[
\begin{array}{c|c}
0 & 0 \\ \hline
0 & H^{(\mu\mu)}
\end{array}\right]\|_2.
\end{align}
Moreover, recall that for any matrix $A$ we have: $\|A\|_2\leq \sqrt{\|A\|_1\|A\|_\infty}$, where $\|A\|_1$ is the maximum column-sum matrix norm of  $A$, and $\|A\|_\infty$ is the maximum row-sum matrix norm \cite{horn2012matrix}. Then, since 
for $p=1,\infty$, we have that
$$\|\left[
\begin{array}{c|c}
H^{(\lambda\lambda)} & 0 \\ \hline
0 & 0
\end{array}\right]\|_p=\|H^{(\lambda\lambda)}\|_p,$$
we can further provide an upper bound for the right hand side of (\ref{eq:H2_1}) as follows: 
\begin{align}
\label{eq:H2_2}
\|H\|_2 &\leq& \sqrt{\|H^{(\lambda\lambda)}\|_1\|H^{(\lambda\lambda)}\|_\infty}+\sqrt{\|H^{(\mu\lambda)}\|_1\|H^{(\mu\lambda)}\|_\infty}
\nonumber \\ &+& \sqrt{\|H^{(\lambda\mu)}\|_1\|H^{(\lambda\mu)}\|_\infty}+\sqrt{\|H^{(\mu\mu)}\|_1\|H^{(\mu\mu)}\|_\infty}
\end{align}

We next obtain upper bounds to each term in the right hand side of (\ref{eq:H2_2}). For $H^{(\lambda\lambda)}$, using (\ref{eq:H_lam_lam}) and (\ref{eq:H_lam_lam_bound}), we get
$$
\|H^{(\lambda\lambda)}\|_1\leq TL\left(\frac{1}{\mu_{\min}\kappa_D}+\frac{S}{\kappa_U}\right),$$ 
$$\|H^{(\lambda\lambda)}\|_\infty\leq TL\left(\frac{1}{\mu_{\min}\kappa_D}+\frac{S}{\kappa_U}\right).$$
From (\ref{eq:H_mu_mu}) and (\ref{eq:H_mu_mu_bound}), for $H^{(\mu\mu)}$s we get
$$
\|H^{(\mu\mu)}\|_1\leq GTL\frac{\lambda_{\max}}{\mu_{\min}^2 \kappa_D}\sum_s{K_s},$$
$$ \|H^{(\mu\mu)}\|_\infty\leq GTL\frac{\lambda_{\max}}{\mu_{\min}^2 \kappa_D}\sum_s{K_s}.
$$
For $H^{(\mu\lambda)}$, (\ref{eq:H_lam_mu}) and (\ref{eq:H_lam_mu_bound}) yield
$$
\|H^{(\mu\lambda)}\|_1\leq \frac{G}{\mu_{\min} \kappa_D} \sum_s K_s, $$ 
$$\|H^{(\mu\lambda)}\|_\infty\leq \frac{GTL}{\mu_{\min} \kappa_D}.
$$
And using (\ref{eq:H_mu_lam}) and (\ref{eq:H_mu_lam_bound}) for $H^{(\lambda\mu)}$, we obtain
$$
\|H^{(\lambda\mu)}\|_1\leq  \frac{TL}{\mu_{\min}^2 \kappa_D}, $$
$$ \|H^{(\lambda\mu)}\|_\infty\leq \frac{1}{\mu_{\min}^2 \kappa_D}\sum_s K_s.
$$
Let 
\begin{align*}
Q&\triangleq TL&\left(\frac{1}{\mu_{\min}\kappa_D}+\frac{S}{\kappa_U}\right)+ GTL\frac{\lambda_{\max}}{\mu_{\min}^2 \kappa_D}\sum_s{K_s} 
\nonumber \\ &+& \sqrt{TL\sum_s K_s}\left(\frac{G}{\mu_{\min}\kappa_D}+\frac{1}{\mu_{\min}^2 \kappa_U}\right).
\end{align*}
Then, using (\ref{eq:H2_2}) we obtain
\begin{align}
\|H\|_2 \leq Q.
\end{align}
Hence, the right hand side of the above equation can be viewed as the Lipschitz constant for $\nabla g(\boldsymbol\nu)$, and consequently, if we require 
$0<\gamma <\frac{2}{Q},$
then the algorithm converges to the primal-dual point of \textsf{P1}.

\end{proof}

Given appropriate $\gamma$, update equations for dual variables converge to minimizers of \textsf{D1}. Strong duality then guarantees that optimal values of \textsf{D1} and \textsf{P1} coincide and that $X^\star$ and $\boldsymbol\sigma^\star$ can be obtained accordingly.  
Next, we give a distributed iterative algorithm, named \emph{Delay-Aware Dynamic Network Utility Maximization (DA-DNUM)}, that is based on a distributed implementation of the above iterative solution. 
Since gradient-based algorithms are not finitely convergent, in DA-DNUM algorithm we introduce a parameter \texttt{th} to stop the iterative procedure. 
DA-DNUM algorithm relies on both the knowledge of network parameters in advance of time interval $\mathcal T$ and ability of explicit/implicit exchange of dual variables between sources and links (more precisely,  between each source $s$ and links on the path of $s$). The pseudo-code of DA-DNUM is shown as Algorithm 1. 

%\vspace{2mm}
\begin{algorithm}[!h]\small%1
\caption{DA-DNUM Algorithm} \DontPrintSemicolon 
Acquire network parameters for the next time horizon $\mathcal T$. \\
Initialize  $X^0,\boldsymbol\sigma^0,\boldsymbol\lambda^0,$ and $\boldsymbol \mu^0$.
\BlankLine
%\BlankLine\vspace{.2mm}
% {\bf Initialization:} 
 \While{$\max\limits_{s,l,t} \left\{|x_{st}^{(j+1)} - x_{st}^{(j)}|, |\sigma_{tl}^{(j+1)} - \sigma_{tl}^{(j)}|\right\} \leq \texttt{th}$}{
%\BlankLine\vspace{.2mm} 
{At each link $l$, for each period $t$, obtain ${\mu^{tl,(j)}}$ and update:}\\
\BlankLine\vspace{.2mm}
  \quad $\sigma_{tl}^{(j+1)} = \left[D'^{-1}\left(-\frac{\lambda_{tl}^{(j)}}{\mu^{tl,(j)}}\right)\right]^+$\\
\BlankLine\vspace{.2mm}
 \quad $\lambda_{tl}^{(j+1)} = \bigg[\lambda_{tl}^{(j)} + \gamma \left(\sum_{s\in\mathcal S} (R_t)_{ls} x_{st}^{(j)} + \sigma_{tl}^{(j+1)}- c_{tl}\right) \bigg]^+$ \\
\BlankLine\vspace{.2mm}
 {At each source $s$, for each period $t$, obtain $\lambda^{st,(j)}$ and compute:}\\
\BlankLine\vspace{.2mm}
 \quad $x_{st}^{(j+1)} = \left[U'^{-1}_{st}\left(\lambda^{st, (j)}\right)\right]_{\mathcal{X}_{st}}$ \\
%\BlankLine\vspace{.2mm}
 \quad $\mu_{sk}^{(j+1)} = \left[\mu_{sk}^{(j)} + \gamma \left(\sum_{t\in\mathcal T} \sum_{l\in\mathcal L} (R_t)_{ls} D(\sigma_{tl}^{(j)}) - d_{sk} \right) \right]^+$\\
}
\label{bottomUpProcedure}
\end{algorithm}

DA-DNUM is devised by solving rate allocation problem using first order methods. In general, first order methods suffer from slow rate of convergence. This unpleasant property becomes more salient in the case of solving DNUM problems, where ahead of each time horizon $T$, we must solve the entire problem during all $t$s. There is a fast alternative to solve problem \textsf{P1}; the second order algorithms \cite{NUM-Newton} achieve the optimal point with faster convergence rate. In a nutshell, in state-of-the-art distributed Newton method the direction of dual  adjustment is the negative gradient scaled by the inverse of the Hessian of $\nabla^2 D$ which results in a faster convergence compared to the first order gradient based algorithms.

\section{Distributed Newton Method: A Fast Solution \label{sec:newton}}
In this section, we develop an alternative solution to problem \textsf{P1} that converges substantially faster at the expense of doing more computations. Our work toward this goal is to employ the recently-proposed distributed Newton method in \cite{NUM-Newton} and extend it to consider our problem. 

\subsection{Reformulation of problem \textsf{P1}}
In order to solve the problem \textsf{P1} using distributed Newton method, we first reformulate problem \textsf{P1} so as to possess only equality constraints. %by defining non-negative slack variables. 

For the sake of proper explanation of problem \textsf{P1} by matrix notations, we represent the routing matrix \mbox{$R_{TL\times TS}$} as  Eq.~\eqref{eq:RN}.
% \begin{eqnarray}
% R = \left[ 
% \begin{matrix}
%  	R_1 & 0(L \times S) & \dots & 0(L \times S) \\
% 	0(L \times S) & R_2 & \dots & 0(L \times S) \\
% 	\vdots & \vdots & \ddots & \vdots \\
% 	0(L \times S) & 0(L \times S) & \dots & R_T
% \end{matrix}
% \right]
% \end{eqnarray}
\begin{figure*}
\begin{eqnarray}
\label{eq:RN}
R = \left[ 
\begin{matrix}
 	(R_1)_1 & 0_{L \times (T-1)} & (R_1)_2 & 0_{L \times (T-1)}  & \dots & (R_1)_S & 0_{L \times (T-1)} \\
	0_{L \times 1} & (R_2)_1 & 0_{L \times (T-1)} & (R_2)_2 & \dots & (R_2)_S & 0_{L \times 1} \\
	\vdots & \vdots & \vdots & \vdots & \ddots & \vdots & \vdots \\
	0_{L \times (T-1)} & (R_T)_1 & 0_{L \times (T-1)} & (R_T)_2 & \dots & 0_{L \times (T-1)} & (R_T)_S
\end{matrix}
\right]
\end{eqnarray}
\hrule
\end{figure*}

% \begin{eqnarray}
% R = \left[ 
% \begin{matrix}
%  	(R_1)_1 & 0 & (R_1)_2 & 0  & \dots & (R_1)_S & 0 \\
% 	0_{L \times 1} & (R_2)_1 & 0 & (R_2)_2 & \dots & (R_2)_S & 0\\
% 	\vdots & \vdots & \vdots & \vdots & \ddots & \vdots & \vdots \\
% 	0 & (R_T)_1 & 0 & (R_T)_2 & \dots & 0 & (R_T)_S
% \end{matrix}
% \right]
% \end{eqnarray}

Similarly, we define a block diagonal matrix $M_{K \times TS}$ for the second set of end-to-end average constraints as follows:

\begin{eqnarray}
M = \left[ 
\begin{matrix}
 	M_1 & 0_{K_2 \times T} & \dots & 0_{K_S \times T} \\
	0_{K_1 \times S} & M_2 & \dots & 0_{K_S \times T} \\
	\vdots & \vdots & \ddots & \vdots \\
	0_{K_1 \times S} & 0_{K_2 \times T} & \dots & M_S
\end{matrix}
\right],
\end{eqnarray}
with $K= \sum_{s=1}^S K_s$. Moreover, we define the following vectors: 

$\vartriangleright$ Source rate vector of length $TS$ at all time periods denoted by \mbox{$\boldsymbol x = [\boldsymbol{x}_t]_{t \in \mathcal{T}}$}.

$\vartriangleright$ Delay upper bound vector of length $K$ for all sources denoted by  \mbox{$\boldsymbol d = [\boldsymbol{d}_s]_{s \in \mathcal{S}}$}.

$\vartriangleright$ Delay vector of length $TS$ for all sources denoted by  \mbox{$\boldsymbol \phi = [\boldsymbol{\phi}_s]_{s \in \mathcal{S}}$}.

$\vartriangleright$ Link capacity vector of length $TL$ at all time periods denoted by \mbox{$\boldsymbol c = [\boldsymbol{c}_t]_{t \in \mathcal{T}}$}.

$\vartriangleright$ Link margin vector of length $TL$ at all time periods denoted by \mbox{$\boldsymbol \sigma = [\boldsymbol{\sigma}_t]_{t \in \mathcal{T}}$}.

We next rewrite problem \textsf{P1} using the introduced vectors as 
%\begin{eqnarray}
%	\label{eq:p2}
%	&& \textsf{P2} \quad \max_{X}  \quad U(X) \nonumber\\
%	&& \textrm{subject to } \nonumber \\
%	\label{eq:p2_c1}
%	&&  \qquad R \boldsymbol x + \boldsymbol \sigma \leq \boldsymbol c \\ 
%	\label{eq:p2_c2}
% 	&& \qquad  M \boldsymbol \phi \leq (M \boldsymbol 1) \circ \boldsymbol d.
%\end{eqnarray}

\begin{align}
% \label{eq:p2}
 \textsf{P2:}&  \quad  \max_{X\in\mathcal{X},\boldsymbol \sigma\ge 0}  \quad U(X)\nonumber\\
 % &&  \nonumber \\
 \label{eq:p2_c1}
	&\textrm{subject to:}\nonumber\\
	 &\qquad\quad R \boldsymbol x + \boldsymbol \sigma \leq \boldsymbol c,\\ 
 \label{eq:p2_c2}
&\qquad\quad M \boldsymbol \phi \leq \mathbf d.
\end{align}

Observe that problems \textsf{P2} and \textsf{P1} are equivalent. The next step is to reformulate the problem \textsf{P2} into a problem with only equality-constrained as follows:

\begin{eqnarray*}
	\label{eq:p3}
	&& \textsf{P3:} \quad \min_{\boldsymbol z}  \quad f(\boldsymbol z) \nonumber\\
	&& \textrm{subject to:} \label{eq:p3_c1}  \qquad A \boldsymbol z = \boldsymbol b.\nonumber \\ 
\end{eqnarray*}

In problem \textsf{P3} we impose two additional concepts: \textit{slack variables} and \textit{logarithmic barrier function}. To express problem \textsf{P2} into equality-constrained problem we introduce two slack variables $\boldsymbol y$ and $\boldsymbol w$  associated to capacity constraint (\ref{eq:p2_c1}) and delay constraint (\ref{eq:p2_c2}), respectively. The vector $\boldsymbol y$ of length $TL$ is non-negative slack variable such that $R \boldsymbol x + \boldsymbol \sigma + \boldsymbol y = \boldsymbol c$, where $y_{tl}$ represents the slack capacity of link $l$ at time $t$. Similarly, for the constraint (\ref{eq:p2_c2}), we introduce variable $\boldsymbol w$ of length $K$ as the non-negative slack variable vector of second set of constraints, i.e., we get \mbox{$M \boldsymbol \phi + \boldsymbol w = \boldsymbol d$}, and $w_k$ represents the slack variable associated with the $k$-th delay constraint among all sources.

We introduce a decision variable $\boldsymbol z$, which is the concatenation of source rates $\boldsymbol x$, link margin vectors $\boldsymbol \sigma$, slack variable of link capacities $\boldsymbol y$, and slack variable of delay constraints $\boldsymbol w$, i.e., 
$$
\boldsymbol z = [\boldsymbol x^{\textsf{T}},\boldsymbol \sigma^{\textsf{T}}, \boldsymbol y^{\textsf{T}}, \boldsymbol w^{\textsf{T}}]^{\textsf{T}}
$$

Moreover, in problem \textsf{P3}, $\mu \geq 0$ is a coefficient for the barrier function, vector \mbox{$\boldsymbol b = [\boldsymbol c^{\textsf{T}}, \boldsymbol d^{\textsf{T}}]^{\textsf{T}}$}, and $A$ is a \mbox{$(TL+K)\times (TS+2TL+K)$} matrix defined by \mbox{$A = [F^{\textsf{T}} \quad\ G^{\textsf{T}}]^{\textsf{T}}$}, where \mbox{$F = [R \quad  I_{TL} \quad I_{TL} \quad 0_{TL \times K}]$} and \mbox{$G = [M \quad 0_{K \times 2TL} \quad I_K]$}. Finally, we get 

$$f(\boldsymbol z) =  -\sum_{t=1}^T \sum_{s=1}^S U_{st}(z_{st}) - \mu \sum_{i=1}^{TS+2TL+K} \log z_i$$

Now, the problem is equivalently  formulated in an appropriate form such that we can apply the fast distributed Newton method. 

\subsection{Distributed Newton Method}
In this subsection, we present an alternative solution based on Newton method with equality-constrained problem \cite[Ch.~10]{Boyd}. In equality-constrained Newton method the initial point must be feasible (i.e., $\boldsymbol z \in \textbf{ dom } f $ and $A \boldsymbol z = b$). Hence, at the first step, we assume that we know a feasible vector $\boldsymbol z$. For example, one such initial vector could be the minimum rate demand of each source at each time period for the rate part of $\boldsymbol z$. Indeed, this is the case that the average delay requirements are large enough; thereby the minimum rate demand of sources at all time periods are feasible points. 

The Newton algorithm produces a minimizing sequence $\boldsymbol z(j+1),\quad j=1,\dots$ given by 
$$\boldsymbol z(j+1) = \boldsymbol z(j) + \delta(j)\Delta \boldsymbol z(j),$$
where $\delta(j)$ is a positive step size and $\Delta \boldsymbol z(j)$ is the Newton direction that is given by
\begin{eqnarray}
	\label{eq:delta_z}
	\Delta \boldsymbol z(j) &=& - H^{-1}_j (\nabla f(\boldsymbol z(j))+A^{\textsf{T}} \boldsymbol \omega(j)), \\
	\label{eq:omega}
	(AH_j^{-1}A^{\textsf{T}})\boldsymbol \omega(j) &=& -AH_j^{-1}\nabla f(\boldsymbol z(j)). 
\end{eqnarray}

In Eq.~\eqref{eq:delta_z}, vector $\boldsymbol \omega(j)$ of length $TL+K$ is the dual (price) vector associated with the equality constraint of problem \textsf{P3}. The first $TL$ elements are related to the capacity constraints of all time periods and the $K$ remaining elements correspond to the average delay constraints. 
Since utility functions are strongly concave and primal vector $\boldsymbol z(j)$ is bounded, one can show that $H_j$ and $AH_j^{-1}A^{\textsf{T}}$ are both invertible \cite{NUM-Newton}. 

The iterations of Eq.~\eqref{eq:delta_z} and Eq.~\eqref{eq:omega} converge to the optimal solution of problem \textsf{P3}, which is an equivalent form of \textsf{P1}. For given values of $\boldsymbol \omega(j)$ for all links in the path of source $s$ in time periods and all of the elements associated to the delay constraints with source $s$, the value of $\Delta \boldsymbol z(j)$ can be computed locally at source $s$. However, the evaluation of inverse matrix $(AH_j^{-1}A^{\textsf{T}})^{-1}$ requires global information, and therefore, $\boldsymbol \omega(j)$ cannot be computed in a decentralized manner. In what follows, based on \cite{NUM-Newton} we present an elegant Newton method for distributed computation of dual vector $\boldsymbol \omega(j)$. 

To obtain a distributed update equation for dual variable, we present a mechanism based on matrix splitting techniques. Let us denote by $\boldsymbol \omega(j,n)$ the value of $\boldsymbol \omega$ at $n$-th dual iteration at the $j$-th primal step. For the sake of notational simplicity, we define some subsidiary functions as summarized in Table \ref{tbl:fn}. Moreover, we define \mbox{$H_j^{-1}(z_i) = [H_j^{-1}]_{ii}$} and \mbox{$\nabla f(\boldsymbol z)(z_i) = [\nabla f(\boldsymbol z)(z_i)]_i$}.

\begin{table}
	\caption{Function Definitions}
		\label{tbl:fn}
	\begin{center}
\begin{tabular}{|c|L{4.3cm}|}
			\hline
			\textbf{Function} & \textbf{(Domain and Range)/Description}\\ \hline\hline
			$l(v) = v \textrm{\textbackslash} L $ & $\{1,\dots,TL\} \rightarrow \{1,\dots,L\}$ \\ \cline{2-2}
			& The link of $v$\\
			\hline
			$t(v) = \lceil \frac{v}{L} \rceil$ & $\{1,\dots,TL\} \rightarrow \{1,\dots,T\}$ \\ \cline{2-2} 
			& The time period of $v$\\
			\hline
			$\tau(v) = \lceil \frac{v}{T} \rceil $ & $\{1,\dots,TS\} \rightarrow \{1,\dots,T\}$ \\ \cline{2-2} 
			& The time period of $v$\\
			\hline
			$s(v) = v \textrm{\textbackslash} T $ & $\{1,\dots,TS\} \rightarrow \{1,\dots,S\}$ \\ \cline{2-2}
			& The source of $v$\\
			\hline
			$p(v) = \max\limits_{s \in \mathcal{S}} \Big\{\sum\limits_{j=0}^{s-1} k_j \leq  v\Big\} $ & $\{1,\dots,K\} \rightarrow \{1,\dots,S\}$ \\ \cline{2-2} 
			& $p(v)$ is the source that $v$th delay constraint is associated to it.\\
			\hline
		\end{tabular}
	\end{center}
\end{table}

We define the weighted sum of the dual variables associated with links that are in route of source $s$ at period $t$ by 

$$ \Pi_{st}(j,n) = H_j^{-1}(x_{st})\sum\limits_{l \in \mathcal{L}} (R_t)_{ls} \omega_{l+(t-1)L}(j,n).$$

Similarly, for the delay part of constraints, we define $\Psi_{st}(j,n)$ as the weighted sum of dual variables associated with the active delay constraints of source $s$ at time $t$ by

$$\Psi_{st}(j,n) = H_j^{-1}(x_{st})\sum\limits_{k \in \mathcal{K}_s} (M_s)_{kt} \omega_{TL+j}(j,n).$$ 

Using the introduced definitions, we can rewrite the left hand side part of Eq. (\ref{eq:omega}) in the form
of Eq.~\eqref{eq:w1} and Eq.~\eqref{eq:w2}.
\begin{figure*}
\begin{align}
	\label{eq:w1}
	[AH_{j}^{-1}A^{\textsf{T}} \boldsymbol \omega(j,n)]_{v}=
	\sum_{s \in \mathcal{S}} (R_{t(v)})_{l(v)s} [\Pi_{s{t(v)}}(j,n) + \Psi_{s{t(v)}}(j,n)] + H_j^{-1}\Big(y_{t(v)l(v)}\Big)\omega_{l(v)+(t(v)-1)L}(j,n)
\end{align}
for $v = 1,\dots,TL$ and 
\begin{align}
	\label{eq:w2}
	[AH_{j}^{-1}A^{\textsf{T}} \boldsymbol \omega(j,n)]_{v} = \sum_{t \in \mathcal{T}} (M_{p(v)})_{vt}[\Pi_{p(v)t}(j,n) + 			\Psi_{p(v)t}(j,n)] + H_j^{-1}(w^v)\omega_{TL+v}(j,n)
\end{align}
for $v = 1,\dots,K$.
\hrule
\hrulefill
\end{figure*}

Using Eq.~\eqref{eq:w1} and Eq.~\eqref{eq:w2} and matrix splitting techniques presented in \cite{NUM-Newton}, now we are ready to obtain an iterative way toward distributed computation of $\boldsymbol \omega$, which is  summarized in Theorem~\ref{theorem:2}. 

\begin{myTheo}
	\label{theorem:2}
	Let $C_j$ be a diagonal matrix with \mbox{$[C_j]_{vv} = [AH_j^{-1}A^{\textsf{T}}]_{vv}$}, \mbox{$B_k = AH_j^{-1} - C_j$} be a symmetric one, $\bar B_j$ be another diagonal matrix with diagonal entries \mbox{$[\bar B_j]_{vv} = \sum\nolimits_{i=1}^{TL+M} [B_j]_{vi}$}, and the diagonal matrix $\bar C_j$ as the sum of $C_j + \bar B_j$. For each primal iteration $j$, the dual sequence $\{\omega_v(j,n)\}$ generated by the iterations Eq.~\eqref{eq:w11} and Eq.~\eqref{eq:w22} converges to the solution of Eq.~\eqref{eq:omega} as $n \rightarrow \infty$.
\begin{figure*}
\begin{align}
	\label{eq:w11}
	\omega_v(k,n+1) &= [\bar C_k^{-1}]_{vv}  \bigg([\bar B_k]_{vv} \omega_v(k,n)
					+[C_k]_{vv} \omega_v(k,n) - [AH^{-1}_kA^{\textsf{T}} \omega(k,n)]_v \nonumber\\
					&+ \sum_{s \in \mathcal{S}} (R_{t(v)})_{l(v)s} H_k^{-1}(x_{st(v)})\nabla f(\boldsymbol z(k))(x_{st(v)})+ H_k^{-1} \Big(y_{l(v)t(v)}\Big) \nabla f(\boldsymbol z(k)) \Big(y_{l(v)t(v)}\Big) \bigg)
\end{align}
for $v = 1,\dots,TL$ and 
\begin{align}
	\label{eq:w22}
	\omega_v(k,n+1) &= [\bar C_k^{-1}]_{vv}  \bigg([\bar B_k]_{vv} \omega_v(k,n)
					+ [C_k]_{vv} \omega_v(k,n) - [AH^{-1}_kA^{\textsf{T}} \omega(k,n)]_v \nonumber\\
					&+ [ C_k]_{vv} \omega_v(k,n) \sum_{t \in \mathcal{T}} (M_{p(v)})_{vt} H_k^{-1}(x_{p(v)t})\nabla f(\boldsymbol z(k))(x_{p(v)t})
					+ H_k^{-1} (z^m) \nabla f(\boldsymbol z(k)) (z^m) \bigg)
\end{align}
for $v = TL+1,\dots,TL+K$.
\hrule
\hrulefill
\end{figure*}
\end{myTheo}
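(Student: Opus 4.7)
The plan is to recognize the scalar update rules (\ref{eq:w11}) and (\ref{eq:w22}) as a matrix splitting iteration for the dual system $(AH_j^{-1}A^{\textsf{T}})\boldsymbol\omega(j) = -AH_j^{-1}\nabla f(\boldsymbol z(j))$, and then invoke classical convergence theory for such splittings. The fixed point of any consistent splitting iteration solves the original linear system, so the task reduces to (i) verifying that the update indeed corresponds to a valid splitting, and (ii) bounding the spectral radius of the resulting iteration matrix strictly below $1$.

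First I would consolidate the two scalar updates into a single vector equation. Using $[\bar C_j]_{vv} = [C_j]_{vv} + [\bar B_j]_{vv}$, the bracketed terms $[\bar B_j]_{vv}\omega_v + [C_j]_{vv}\omega_v$ collapse to $[\bar C_j]_{vv}\omega_v$, and unfolding the definitions of $A$, $F$, $G$, $R$, and $M$ identifies the remaining sums involving $\nabla f$ as precisely $-[AH_j^{-1}\nabla f(\boldsymbol z(j))]_v$. The resulting vector form is
$$\bar C_j \,\boldsymbol\omega(j,n+1) = \bigl(\bar C_j - AH_j^{-1}A^{\textsf{T}}\bigr)\boldsymbol\omega(j,n) - AH_j^{-1}\nabla f(\boldsymbol z(j)),$$
which is exactly the splitting iteration associated with the decomposition $AH_j^{-1}A^{\textsf{T}} = \bar C_j - (\bar B_j - B_j)$. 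Its fixed point therefore solves (\ref{eq:omega}).

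Next I would prove that the iteration matrix $T_j := \bar C_j^{-1}(\bar B_j - B_j) = I - \bar C_j^{-1}(AH_j^{-1}A^{\textsf{T}})$ has spectral radius strictly less than $1$. Because each $U_{st}$ is strongly concave (assumption A5) and the log-barrier contributes strictly positive diagonal terms in the feasible interior, $H_j$ is a positive definite diagonal matrix; combined with the non-negative $0/1$ entries of $A$, the matrix $M_j := AH_j^{-1}A^{\textsf{T}}$ is symmetric positive definite with entrywise non-negative entries. Hence $B_j = M_j - C_j$ has zero diagonal and non-negative off-diagonal, $\bar C_j$ is a positive diagonal matrix, and $T_j$ is similar to the symmetric matrix $I - \bar C_j^{-1/2} M_j \bar C_j^{-1/2}$. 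Its eigenvalues are therefore $\{1-\lambda_i\}$ with all $\lambda_i > 0$, and $\rho(T_j) < 1$ is equivalent to $2\bar C_j - M_j \succ 0$.

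The main technical step is to verify this last positive definiteness. I would argue that $2\bar C_j - M_j = C_j + 2\bar B_j - B_j$ is strictly diagonally dominant: its diagonal entry at position $v$ is $[C_j]_{vv} + 2[\bar B_j]_{vv}$, while the absolute off-diagonal row-sum equals $\sum_{w\neq v} B_{j,vw} = [\bar B_j]_{vv}$; the slack $[C_j]_{vv} + [\bar B_j]_{vv} = [\bar C_j]_{vv}$ is strictly positive because $[C_j]_{vv} = [M_j]_{vv} > 0$. By Gershgorin, $2\bar C_j - M_j$ is positive definite, giving $\rho(T_j) < 1$ and hence geometric convergence of the splitting iteration to $\boldsymbol\omega^\ast$, as required. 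I expect the most delicate step to be the bookkeeping in the first stage: carefully matching the scalar forcing terms in (\ref{eq:w11})–(\ref{eq:w22}) against the components of $-AH_j^{-1}\nabla f(\boldsymbol z(j))$ requires tracking contributions from rates, link margins, and both slack-variable blocks through the block structure of $F$ and $G$.
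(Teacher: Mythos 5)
Your proposal is correct and follows essentially the route the paper itself relies on: the paper gives no self-contained proof of Theorem~\ref{theorem:2}, deferring to the matrix-splitting analysis of \cite{NUM-Newton}, and your argument reconstructs exactly that analysis — writing the scalar updates as the splitting $AH_j^{-1}A^{\textsf{T}} = \bar C_j - (\bar B_j - B_j)$ whose fixed point solves Eq.~\eqref{eq:omega}, and showing $\rho\bigl(\bar C_j^{-1}(\bar B_j - B_j)\bigr) < 1$ by establishing $2\bar C_j - AH_j^{-1}A^{\textsf{T}} \succ 0$ through diagonal dominance. The only detail worth stating explicitly is that positive definiteness of $AH_j^{-1}A^{\textsf{T}}$ requires $A$ to have full row rank, which holds here because of the identity blocks $I_{TL}$ and $I_K$ inside $F$ and $G$ together with the positive diagonal Hessian $H_j$ generated by the strongly concave utilities and the log barrier.
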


Finally, in \cite{NUM-Newton} it has been shown that the update equations described in Eq.~\eqref{eq:w11} and Eq.~\eqref{eq:w22} could be calculated using merely local information at sources, thereby the corresponding algorithm is implemented in a distributed fashion. 

\section{A Solution with limited future knowledge}
\label{sec:mpc}
The solutions presented in Sections~\ref{sec:sol} and \ref{sec:newton} are based on the assumption that the problem data (input parameters) for the entire time horizon is available ahead of time. Dependence of these solutions on the precise knowledge of future network parameters stimulates devising another scheme that efficiently works under uncertainty of the parameters. In this section, we extend our solution in a way such that the problem data is not fully known in advance.  
Without loss of generality, we assume that only the link capacities are revealed at the beginning of each period. Note that this approach could be extended to capture the case that other parameters such that source utilities are not known ahead of time. Our approach in this section is based on a causality constraint such that the source rates at period $t$ is a function of the link capacities up to period $t$. 
We further note that this is a convex stochastic problem 	\cite{Bert_Stoch}, where the goal is to maximize the expected aggregated utility of all sources subject to the capacity, average delay, and causality constraints.
%Finding the flow rate policy that maximizes expected utility, subject to the rate, contract, and causality. 
Like the conventional NUM problems this problem could be efficiently tackled by centralized approaches, but, here we are interested in decentralized schemes. 

\subsection{MPC-based solution}
To obtain a decentralized stochastic solution, we construct our solution based on Model Predictive Control (MPC) \cite{mpc}.
To calculate the source rates ($x_{st}$s) and link margin values ($\sigma_{tl}$s) for any particular period $\tau$, instead of solving problem \textsf{P1}, we construct and solve problem \textsf{P4} as follows

\begin{align*}
% \label{eq:p4}
 \textsf{P4:}&  \quad  \max_{x_{st},s\in\mathcal{S},t\in\mathcal{T}^{\tau}}  \quad \sum_{t\in\mathcal{T}^{\tau}} \sum_{s\in\mathcal{S}} U_{st}(x_{st})\\
 % &&  \nonumber \\
% \label{eq:p1_c1}
	&\textrm{subject to:} \\
	&\qquad\quad R_\tau X\mathbf e_\tau +\boldsymbol \sigma_{\tau}\leq \mathbf c_{\tau},\\ 
			&\qquad\quad R_t X\mathbf e_t +\boldsymbol \sigma_{t}\leq \hat{\mathbf  c}(t|\tau), \qquad\forall  t \in \mathcal{T}^{\tau},\\ 
% \label{eq:p1_c2}
&\qquad\quad M_{s} \boldsymbol \phi_{s} \leq \mathbf d_s, \qquad\qquad\forall s \in \mathcal{S},\\
% \label{eq:p1_c3}
&\qquad\quad\phi_{st} = \sum_{l\in\mathcal L} (R_t)_{ls}D(\sigma_{tl}), \quad\forall s \in \mathcal{S}, \forall t\in\mathcal T, 
\end{align*}
where $\mathcal{T}^{\tau} = \{\tau+1,\dots,T\}$ and ${\hat{\mathbf c}(t|\tau) = E[\boldsymbol c_t|\boldsymbol c_1,\dots,\boldsymbol c_{\tau}],t\in\mathcal{T}^{\tau}}$ is the expected value of link capacities, given the entire information at period $\tau$. Consequently, in problem \textsf{P4}, at any period $\tau$ the whole information about link capacities before and at period $\tau$ is revealed. Furthermore, for the future periods we use the conditional mean values of link capacities. In addition, since each source can declare several average delay constraints, it is conceivable that some delay constraints are expired before beginning of period $\tau$, i.e., the active interval of the constraint is started and finished before period $\tau$. Another situation is when a contract has already been active. That is, the start time is ``$\leq \tau$'', while the final time is ``$\geq \tau$''. Here, the source rate for periods $\leq \tau$ are already calculated, let denote them by $x'_{st}, s\in\mathcal{S},t\in \{1,\dots,\tau\}$.
Then, the contract inequality is interpreted as follows: the source rates for $t < \tau$ are taken to be $x'_{st}$. And so, this part is fixed and the constraint should be satisfied for the remaining part in period $t \geq \tau$. 

Finally, problem \textsf{P4} is a particular version of problem \textsf{P1} and both the optimal algorithms in Sections~\ref{sec:sol} and \ref{sec:newton} could be employed to find its optimal solution. But, in each period $\tau$, we pick the optimal source rate and link margin values for just the time slot $\tau$ and then we solve the problem again for the remaining time slots. For each time slot, the optimal values of the previous periods are used as input parameters. 

\section{Experimental Results}
This section is devoted to the experimental results. First, we concentrate on a tractable network topology to verify the correctness of DA-DNUM. Second, by describing two comparison scenarios, we investigate both the superiority of our work against similar approaches as well as scalability of DA-DNUM.
\label{sec:sim}
\subsection{Experiment 1: Simple and Tractable Topology}
In order to facilitate the detail discussion of the results, we have chosen a network with time-invariant routing and topology shown in Fig. \ref{fig:top}. We set $T=10$ and $c_{1t}$ and $c_{4t}$ are chosen uniformly at random from $[4,6]$, and $c_{2t}$ and $c_{3t}$ are randomly and uniformly drawn from $[4,10]$. We choose $U_{st}(x_{st}) = \log x_{st}$ for all $s$ and $t$. Also, we assume that $D(z)={1 \over z}$ for all links that represents M/M/1 queuing model. 
We give delay indicator matrices as well as vectors $\mathbf d_s,s\in\mathcal S$ below:

\begin{eqnarray}
	M_1 &=& {1 \over 3}\times \left[\begin{matrix}
		1 & 1 & 1 & 0 & 0 & 0 & 0 & 0 & 0 & 0\\
		0 & 0 & 0 & 0 & 0 & 1 & 1 & 1 & 0 & 0
	\end{matrix}\right],\nonumber\\ %\quad \boldsymbol 	d_1 = \left[\begin{matrix} 2\\1\end{matrix}\right], \nonumber\\
	M_2 &=& {1 \over 6} \times \left[\begin{matrix}
		1 & 1 & 1 & 1 & 1 & 1 & 0 & 0 & 0 & 0
	\end{matrix}\right], \nonumber\\%\quad \boldsymbol d_2 = 2, \nonumber\\
	M_3 &=& {1 \over 6} \times \left[\begin{matrix}
		0 & 0 & 1 & 1 & 1 & 1 & 1 & 1 & 0 & 0
	\end{matrix}\right], \nonumber\\%\quad \boldsymbol d_3 = 2, \nonumber\\
	M_4 &=& {1 \over 4} \times \left[\begin{matrix}
		0 & 0 & 1 & 1 & 1 & 1 & 0 & 0 & 0 & 0
	\end{matrix}\right], \nonumber\\
\mathbf d_1 &=& \left[2\quad 1\right]^{\textsf T},\qquad d_2=d_3= 2,\qquad d_4 = 2.5\nonumber.
\end{eqnarray}

 % \vspace{2mm}
 \begin{figure}[b]
 \begin{center}
 \includegraphics[angle=0,scale=.35]{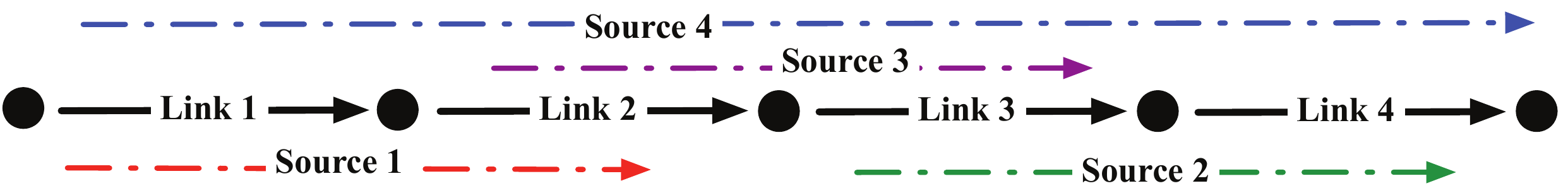}
 \end{center}
 \caption{Network Topology, Experiment 1}
 \label{fig:top}
 \end{figure}
 % \vspace{2mm}
% 
% \vspace{2mm}
% \begin{figure}[b]
% \begin{center}
% \includegraphics[angle=0,scale=.25]{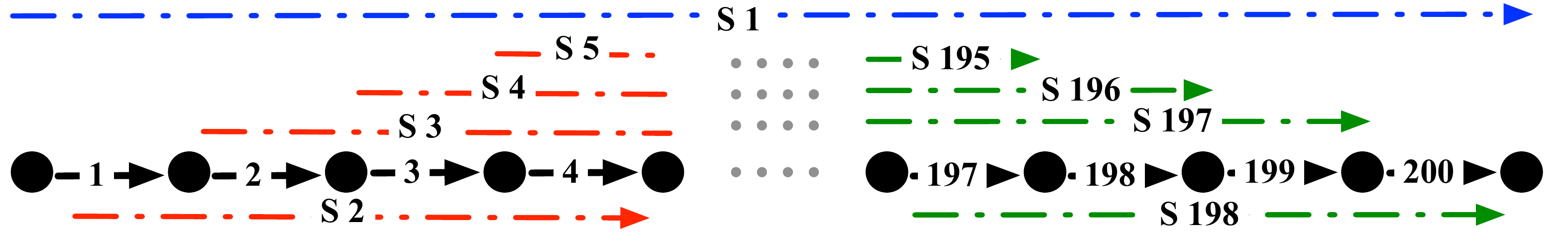}
% \end{center}
% \caption{Network Topology 2}
% \label{fig:top2}
% \end{figure}
% \vspace{2mm}

% \vspace{2mm}
% \begin{figure}[b]
% \centering
% % \subfigure[Experiment 1]{
% \includegraphics[angle=0,scale=.2]{topology}
% % \label{fig:top}}
% %\subfigure[Experiment 2]{
% %\includegraphics[angle=0,scale=.23]{topology2}
% %\label{fig:top2}}
% \caption[]{Topology for Experiment 1}
% \label{fig:top}
% \end{figure}
% \vspace{2mm}
 \begin{figure*}[!t]
 \centering
 \subfigure[Optimal Source Rates]{
 \includegraphics[angle=0,scale=.32]{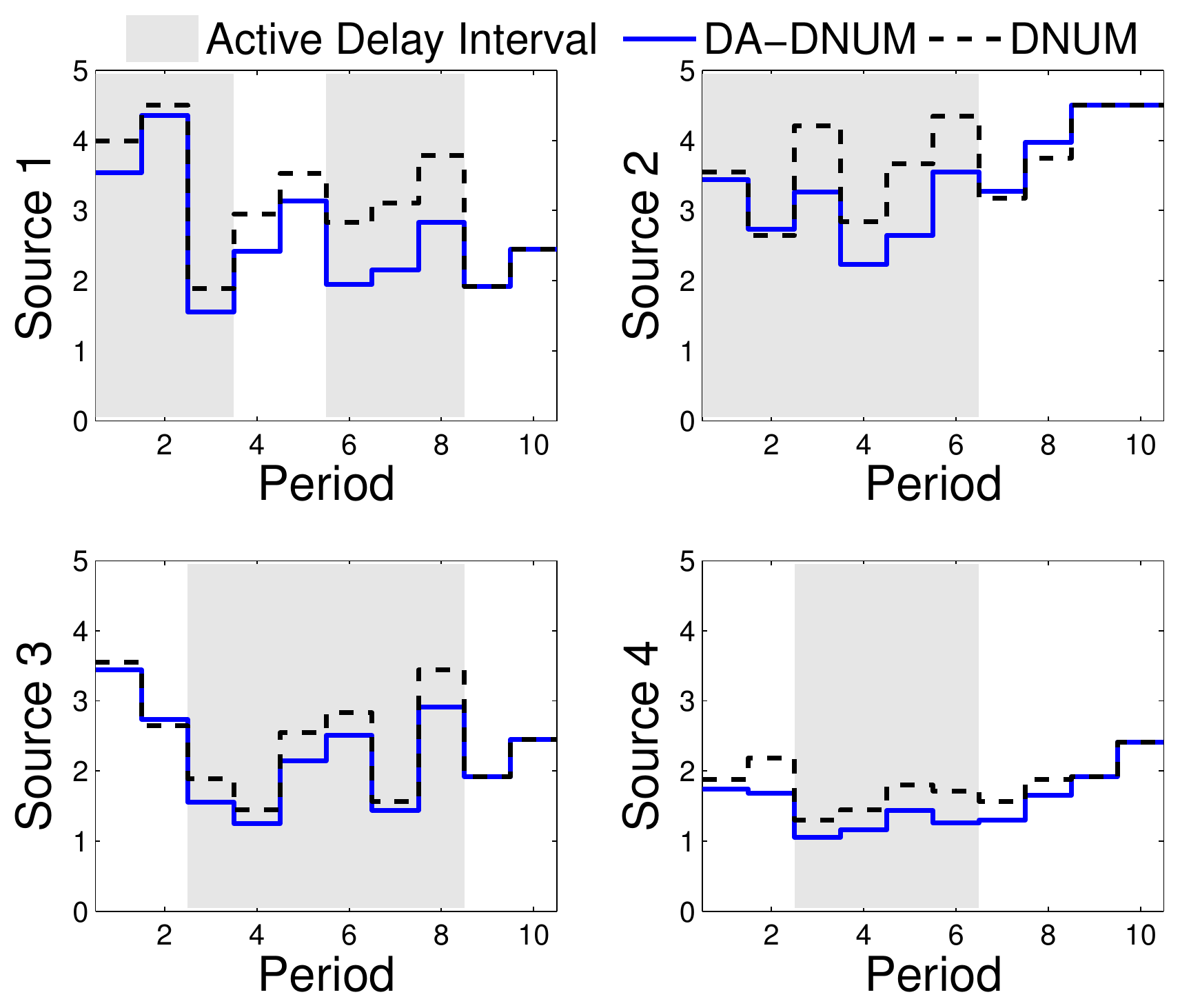}
 \label{fig:rate}}
  \subfigure[Source Delays]{
 \includegraphics[angle=0,scale=.32]{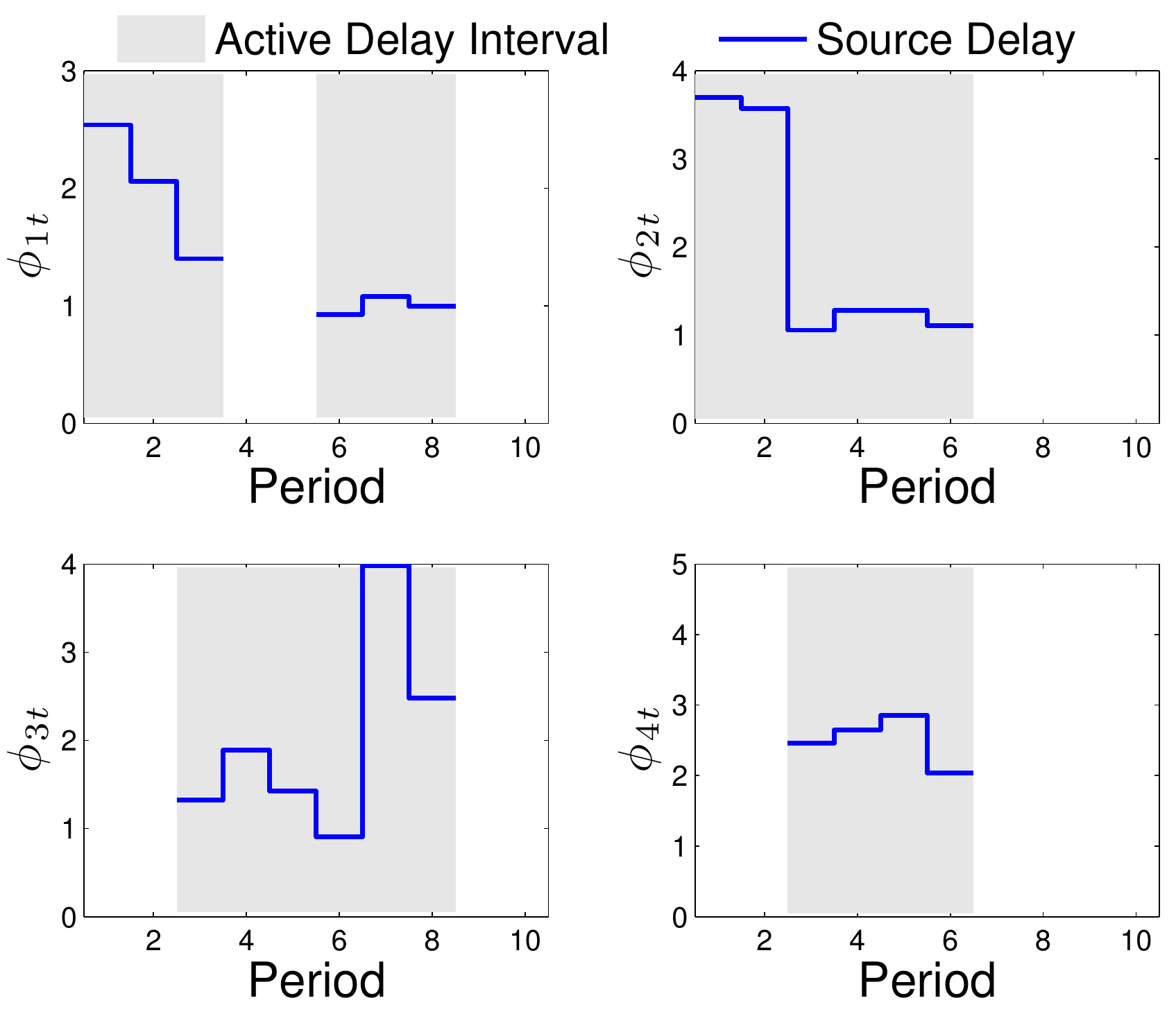}
  \label{fig:delay}}
 \subfigure[Link Capacities]{
 \includegraphics[angle=0,scale=.32]{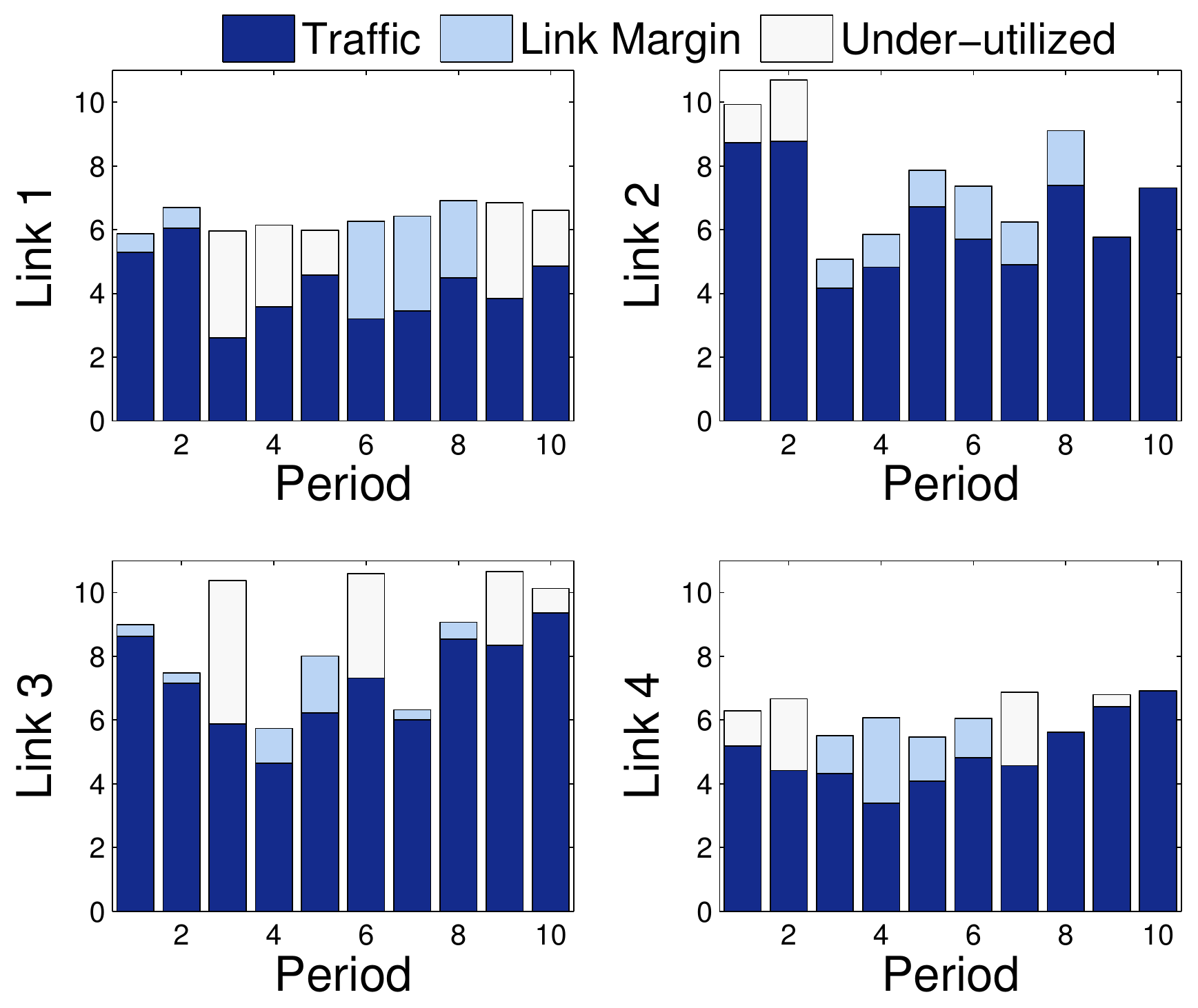}
 \label{fig:link}}
%   \subfigure[Comparison Scenario]{
%  \includegraphics[angle=0,scale=.23]{infeasible2}
%   \label{fig:com}}
 \caption[]{Results of Experiment 1}
 \label{fig:res}
 \end{figure*}

We stress that the above delay indicator matrices imply that for $t=9,10$, there is no delay constraint and thus for these periods, \textsf{P1} degenerates to DNUM \cite{DNUM} without delivery contracts.

Fig. \ref{fig:res} displays the rate allocation result obtained from DA-DNUM algorithm with $\gamma = 0.01$ and $\texttt{th}=0.01$. For the sake of comparison, Fig. \ref{fig:res} also shows the rate allocation result of DNUM (without delivery contracts), which is obtained by solving  \textsf{P1} after removal of delay constraints. 
Fig. \ref{fig:rate} shows final source rates of the two cases. As we expect, Fig. \ref{fig:rate} exhibits  the same values for both DA-DNUM and DNUM for $t=9,10$. By contrast, for $t=1,\dots,8$ source rates obtained by DA-DNUM are lower than those provided by DNUM. This stems from existence of at least one delay constraint in any of these periods. 

End-to-end queuing delays $\phi_{st}$ for all $s$ and $t$ are depicted in Fig. \ref{fig:delay}.
To achieve higher system-wide aggregate utility, DA-DNUM allows some fluctuations in source delays during periods, while the average delays do not exceed $\mathbf d_s$. 
This flexibility in rate allocation due to the time-varying algorithm design can yield a wider feasible rate allocation schemes in comparison with the single-period NUM that is expressed in the next subsection by another experiment. 
Finally, Fig. \ref{fig:link} shows link traffics, link margins, and the amount of under-utilized link capacities. Clearly, in periods $t=9,10$, all links possess zero link margins. On the other hand, for $t=1,\dots,8$, positive values for link margin variables (for at least one link) evince that there is at least one active delay constraint imposed by the sources.   

We also have executed another experiment with the same settings of the previous one to verify the correctness and analyze the convergence behavior of distributed Newton solution that is proposed in Section~\ref{sec:newton}. As expected, the optimal rate and link margin values are the same for DA-DNUM and distributed Newton method, since both solutions are optimal. However, DA-DNUM converges to the optimal values nearly after 320 iterations with $\texttt{th}=0.01$. This value is about 34 iterations for the distributed Newton method, which reveals a significantly faster convergence rate (in terms of number of iterations) compared to the DA-DNUM. 

\begin{figure}[b]
\begin{center}
\includegraphics[angle=0,scale=.27]{}
\end{center}
\caption{Network Topology, Experiment 2}
\label{fig:top2}
\end{figure}

% \begin{equation}
% 	\label{eq:R}
%  R_t = \left[  \begin{matrix} 1 & 1 & 1 & 1 & 1 &\dots & 1 & 1 \\
% 		1 & 1 & 1 & 1 & 0 & \dots & 0 & 0\\
% 		0 & 1 & 1 & 1 & 1 & \dots & 0 & 0\\
% 		\vdots & \vdots & \vdots & \vdots & \vdots & \ddots & \vdots & \vdots\\ 
% 		0 & 0 & 0 & 0 & 0 &\dots & 1 & 1\end{matrix}  \right]^\textsf{T} 
% \end{equation}

We also investigate the effectiveness of the MPC-based solution that is proposed in Section~\ref{sec:mpc}. Toward this, we used the same topology and settings as the previous one. But, in each period we should decide for the expected link capacity values.  For all $\tau$s, we use $\hat{c}_1(t|\tau) = \hat{c}_4(t|\tau) = 5$ and $\hat{c}_2(t|\tau) = \hat{c}_3(t|\tau) = 7$.
We compare the results of problem \textsf{P4} with the knowledge of the current and the previous periods with the global problem \textsf{P1} with the entire knowledge of the horizon. The resulting source rates and link margin values are quite similar. To illustrate the accuracy of the suboptimal MPC-based solution we report the utility values obtained by two solutions. The aggregated utility value for MPC-based solution is 36.16; while the utility value obtained by the DA-DNUM solution (as the solution of problem \textsf{P1}) is 37.01 which evinces 2.2\% difference. 

%\subsection{Experiment 2}
\subsection{Experiment 2: Comparison Scenario} We next compare DA-DNUM  with the algorithm proposed in \cite{QiuDelay} (by assuming fixed capacities) in a large-scale scenario. We remark that the algorithm proposed in \cite{QiuDelay} is based on the single-period version of NUM that is customized in delay-sensitive setting. Consequently, Single-period NUM in algorithm of \cite{QiuDelay} persuades us to solve $T$ separate problems for the entire $\mathcal T$.  We consider a line topology with $200$ links and $198$ sources (Fig. \ref{fig:top2}) whose $200 \times 198$ routing matrix is given in below:

\begin{equation}
	\label{eq:R}
 R_t = \left[  \begin{matrix} 1 & 1 & 0 & 0 &\dots & 0 & 0 \\
		1 & 1 & 1 & 0 & \dots & 0 & 0\\
		1 & 1 & 1 & 1 & \dots & 0 & 0\\
		1 & 1 & 1 & 1 & \dots & 0 & 0\\
		1 & 0 & 1 & 1 & \dots & 0 & 0\\
		\vdots & \vdots & \vdots & \vdots & \ddots & \vdots & \vdots\\ 
		1 & 0 & 0 & 0 & \dots & 0 & 1\end{matrix}  \right].
\end{equation}
\normalsize

In addition, the other parameters are listed in Table I. To clearly exhibit the different behavior of DA-DNUM, we intentionally set up only 2 average delay constraints for source 1 (the source with all links on its path) and source 2 (the one that traverses through first 4 links).

To exhibit the flexibility of DA-DNUM, this experiment simply obliges a minimum rate demand as ${x_{1,2}^{\min} = 5}$. This means that the minimum rate requirement of source 1 at period 2 is 5.
The aforementioned minimum rate demand is in conflict with the average delay requirement since the higher rate results in higher end-to-end delay according to the limited capacity of links. Nonetheless,  DA-DNUM easily remedies this conflicting situation by assigning the declared minimum rate to $s_1$ at $t_2$, thus enduring a larger short-term delay (around 85 instead of $\mathbf d_1 = 50$). Thanks to supporting time-coupled system model, DA-DNUM allocates proper rates to this source in other periods, so as to maintain the average delay below 50. In contrast, the single-period algorithm of \cite{QiuDelay} fails for this scenario since the underlying NUM becomes infeasible. This simple experiment signifies the relatively wider set of feasible problems of DA-DNUM. One may construct several other feasible scenarios for DA-DNUM that are infeasible for the problem of \cite{QiuDelay}.

\begin{table}[]
\footnotesize
\caption{Parameters of Experiment 2}
\label{tbl_pr}
\begin{center}
\begin{tabular}{|c|c|}
\hline
Parameter & Value\\
\hline\hline
$S$ & 198\\
\hline
$L$ & 200\\
\hline
$T$ & 50\\
\hline
$c_{tl}, t \in \mathcal{T},l\in \mathcal{L}$ & [8,12] \emph{kbps}\\
\hline
$k_s, s \in \{1,2\}$ & 1\\
\hline
$M_s, s \in \{1,2\}$ & $[1/50]_{1 \times 50}$\\
\hline 
$k_s, s \in \{3,\dots,198\}$ & 0 \\
\hline
$M_s, s \in \{3,\dots,198\}$ & $[0]_{1 \times 50}$ \\
\hline
$\mathbf d_s, s \in \{1,2\} $ & $50$\\
\hline

\end{tabular}
\end{center}
\normalsize
\end{table}

% \vspace{2mm}
% \begin{table}[]
% \scriptsize
% \caption{{\color{red} Experiment Configuration}}
% \label{tbl_pr}
% \begin{center}
% \begin{tabular}{|c|c|}
% \hline
% {\color{red}
% \textbf{Parameter}} & \textbf{Value}\\
% \hline
% $S = L = T$ & 20\\
% \hline
% $c_{tl},\quad t \in \mathcal{T},l\in \mathcal{L}$ & 20 \emph{kbps} \\
% \hline
% $k_s, \quad s \in \mathcal{S}$ & 1 \\
% \hline 
% $M_s,\quad s \in \mathcal{S}$ & randomly generated \\
% \hline 
% $\boldsymbol d_s, \quad s \in \mathcal{S} $ & uniformly in $[4,6]$\\
% \hline
% 
% \end{tabular}
% \end{center}
% \normalsize
% \end{table}
% \vspace{2mm}
% \footnotesize
% \begin{equation}
% 	\label{eq:R}
%  R = \left[  \begin{matrix} 1 & 1 & 1 & 1 & 1 &\dots & 1 & 1 \\
% 		1 & 1 & 1 & 1 & 0 & \dots & 0 & 0\\
% 		0 & 1 & 1 & 1 & 1 & \dots & 0 & 0\\
% 		\vdots & \vdots & \vdots & \vdots & \vdots & \ddots & \vdots & \vdots\\ 
% 		0 & 0 & 0 & 0 & 0 &\dots & 1 & 1\end{matrix}  \right]
% \end{equation}
% \normalsize

 \begin{table}[]
 \caption{Parameters of Experiments 3}
 \footnotesize
 \label{tbl_pr}
 \begin{center}
 \begin{tabular}{|c|c|}
 \hline
 \textbf{Parameter} & \textbf{Value}\\
 \hline\hline
  $S$ & 20\\
 \hline
  $L$ & 20\\
 \hline
  $T$ & 20\\
 \hline
 $c_{tl}, t \in \mathcal{T},l\in \mathcal{L}$ & 20 \emph{kbps} \\
 \hline
 $k_s, \quad s \in \mathcal{S}$ & 1\\
 \hline 
 
 $M_s,\quad s \in \mathcal{S}$ & \emph{random} \\
 \hline
  $\boldsymbol d_s, \quad s \in \mathcal{S} $ & $[4,6]$\\
 \hline
 
 \end{tabular}
 \end{center}
 \end{table}
 \normalsize

\subsection{Experiment 3: Random Topology}
We examine DA-DNUM for the case of more complex randomly generated topologies. We run DA-DNUM and algorithm of \cite{QiuDelay} for a scenario with a random topology comprising $20$ sources and $20$ links (see Table I for the parameters). As it allows temporal fluctuations in source delays, DA-DNUM yields slightly better link utilization  compared to \cite{QiuDelay}: The under utilized link capacity averaged over all links and all periods for the algorithm of \cite{QiuDelay} is 4.06 whereas it is 3.91 for DA-DNUM. Hence, $3.7\%$ improvement is obtained. Consequently, by these two experiments, we show both wider range of feasibility along with better resource utilization of DA-DNUM against existing single-period approaches.

\section{Conclusion and Future Directions}
\label{sec:conc}
To ameliorate QoS experience in real-time networking applications in terms of guaranteeing fixed average end-to-end delay over long periods, we addressed a dynamic NUM problem with source-driven time-coupled constraints on average end-to-end delay. We proposed two set of solutions, \textit{first}, a DA-DNUM algorithm as the dual-based distributed solution of the formulated optimization problem. \textit{Second,} we devised another solution based on the recently-proposed distributed Newton method to improve the slow convergence rate of the DA-DNUM. DA-DNUM allocates source rates in a way that achieves the maximum network-wide utility aggregated over time interval while satisfying capacity and delay constraints. Numerical experiments exhibited that, compared to existing schemes, DA-DNUM admits relatively wider feasible scenarios along with higher resource utilization. This enhancement originated from multi-period problem setup that allows short-term delay fluctuations while keeps long-term value around the required one.
Obtained results stimulate further research activities. A promising line is to investigate the solution when link delay function is a non-convex function which is not far from the reality in the case of complicated packet arrival models. In addition, we plan to extend this work in wireless scenarios to jointly consider the delay-aware rate allocation and link scheduling as a cross-layer solution design. 
% Finally, handling unknown packet arrival patterns in underlying network model is another interesting subject to extend the work.

\bibliography{ref}
\bibliographystyle{IEEEtran}

%\section*{Appendix I: Proof of Theorem 1}

\end{document}